\numberwithin{table}{section}
\numberwithin{figure}{section}
\numberwithin{equation}{section}
\definecolor{darkblue}{rgb}{.2, 0.2,.8}
\definecolor{darkgreen}{rgb}{0,0.5,0.3}
\definecolor{darkred}{rgb}{.8, .1,.1}
\newcommand{\red}{\color{black}}
\newcommand{\bfX}{\vect{X}}
\newcommand{\bfpi}{\vect{\pi}}
\newcommand{\bfT}{\mat{T}}
\newcommand{\bft}{\vect{t}}
\newcommand{\bfe}{\vect{e}}
\newcommand{\0}{\mat{0}}
\newcommand{\E}{\mathbb{E}}
\renewcommand{\P }{{\mathbb P}}
\newcommand{\ov}{\overline}
\newcommand{\eqd}{\stackrel{d}{=}}
\newtheorem{lemma}{Lemma}[section]
\newtheorem{theorem}[lemma]{Theorem}
\newtheorem{proposition}[lemma]{Proposition}
\newtheorem{example}[lemma]{Example}
\newtheorem{remark}{Remark}[section]
\newcommand{\vect}[1]{\pmb{#1}}
\newcommand{\mat}[1]{\boldsymbol{\bm #1}}
\DeclareMathOperator*{\argmax}{arg\,max}
\begin{document}
\bibliographystyle{apalike}
\title{Mortality modeling and regression with matrix distributions}

\author[H. Albrecher]{Hansj\"org Albrecher}
\address{Department of Actuarial Science, Faculty of Business and Economics, University of Lausanne, UNIL-Dorigny,
CH-1015 Lausanne}
\email{hansjoerg.albrecher@unil.ch}

\author[M. Bladt]{Martin Bladt}
\address{Department of Actuarial Science, Faculty of Business and Economics, University of Lausanne, UNIL-Dorigny,
CH-1015 Lausanne}
\email{martin.bladt@unil.ch}

\author[M. Bladt]{Mogens Bladt}
\address{Department  of Mathematics,
University of Copenhagen,
Universitetsparken 5,
DK-2100 Copenhagen,
Denmark}
\email{bladt@math.ku.dk}

\author[J. Yslas]{Jorge Yslas}
\address{Institute of Mathematical Statistics and Actuarial Science,
University of Bern,
Alpeneggstrasse 22,
CH-3012 Bern,
Switzerland}
\email{jorge.yslas@stat.unibe.ch}

\begin{abstract}
In this paper we investigate the flexibility of matrix distributions for the modeling of mortality. Starting from a simple Gompertz law, we show how the introduction of matrix-valued parameters via inhomogeneous phase-type distributions can lead to reasonably accurate and relatively parsimonious models for mortality curves across the entire lifespan. A particular feature of the proposed model framework is that it allows for a more direct interpretation of the implied underlying aging process than some previous approaches. Subsequently, towards applications of the approach for multi-population mortality modeling, we introduce regression via the concept of proportional intensities, which are more flexible than proportional hazard models, and we show that the two classes are asymptotically equivalent. 
We illustrate how the model parameters can be estimated from data by providing an adapted EM algorithm for which the likelihood increases at each iteration. The practical feasibility and competitiveness of the proposed approach, { including the right-censored case, are illustrated by several sets of mortality and survival data.}

\end{abstract}
\keywords{survival analysis; regression models; phase-type distributions, inhomogeneous phase-type distributions; inhomogeneous Markov processes}
\subjclass{Primary 62N02; Secondary; 62M05, 60J22 }
\maketitle

\section{Introduction}\label{sec:int}
The statistical modeling of human lifetimes is a central topic in actuarial science, as it has immediate consequences for the pricing and management of a wide range of insurance and pension products (see, e.g., \cite{olivieri2015}, \cite{dickson2019}). Starting all the way back with \cite{gompertz1825}, who proposed a mortality rate that increases exponentially over the lifetime, the field has seen a tremendous development over the last 200 years, and nowadays, sophisticated models are available that take into account the changing nature of mortality over time as well as common and individual trends across different (sub)populations (see, e.g., \cite{pitacco2004,pitacco2019} for surveys and \cite{denuit2016}, \cite{dowd2020}, \cite{lin2021correlated}, \cite{shapovalov2021}, \cite{zeddouk2020mean} and \cite{renshaw2021} for some very recent developments in different research directions, and \cite{barigou2021} for a proposition how to combine various alternative models into one). A detailed understanding of these trends is crucial for prudent management of longevity products (\cite{barrieu2012}, \cite{sherris2014})
 and the future of old-age provision on the society level in general (see, e.g., \cite{albrecher2016}). At the same time, 
 the statistical estimation of mortality models from given (often incomplete and interval-censored) mortality data can be quite challenging; see, e.g., \cite{macdonald2018modelling} for a recent account. In the presence of the many available model options for a given purpose (e.g., constructing lifetables or making mortality projections), a number of countries nowadays rely on an accorded effort of academia and insurance practice for respective  recommendations or guidelines (see, e.g., \cite{antonio2017} for an excellent survey on the current approach of Belgium and the Netherlands). 

Most of the models in practical use are based on a purely statistical fitting of observed mortality patterns and subsequently extrapolating the trends of the past into the future, which historically has sometimes led to an underestimation of the actual mortality improvement. In addition, the resulting models typically rely on an enormous number of parameters, like for instance, 300 in the very popular Lee-Carter model (\cite{lee1992modeling}), which employs separate parameters for each year and each age, that then need to be estimated from the given data, cf.\ \cite{li2009}. This may not be considered an issue, as long as the values of these parameters remain reasonably stable during updates, and if for instance cohort effects are thought of dominating the stochastic nature of mortality. At the same time, it can be interesting to see whether, through other and possibly parsimonious approaches, one can also capture observed mortality patterns to a satisfactory degree. 
Motivated by corresponding approaches in medicine and early contributions by, e.g.,  \cite{gutterman1998} in the actuarial context, \cite{lin2007markov} proposed an interesting alternative approach for the modeling of human lifetimes that links its parameters to the biological and physiological mechanisms of aging. Concretely, one may imagine a lifetime as traversing through a number of stages that can be thought of as markers of the biological age, with the time that it takes an individual to go from one stage to the next being random, and an additional possibility to die 'prematurely' at each stage along the way (e.g., due to accidents or other external causes). If such a model can be calibrated to mortality data in a satisfactory way, then one could possibly attribute particular markers to each stage, linking the parameters of the model more directly to the physiological aging mechanism, and  opening up the possibility to include expert opinions on particular mortality trends into the models and the resulting forecasts more directly. In addition, such an approach could also allow to compare the mortality patterns of different (sub)populations on the basis of changes of transition rates between those stages, which is a somewhat appealing alternative to the purely statistical approach that is typically followed today, see \cite{lin2007markov} and also \cite{cheng2020} for a detailed discussion as well as \cite{hassan2014} for an application in a disability context. 

A natural candidate for a model of the aging process of the human body is a finite-state homogeneous continuous-time Markov process with a single absorbing state (death). In the context of modeling stages of a disease, such a Markov approach can, e.g., be traced back to \cite{kay1986}, \cite{longini1989}, \cite{guihenneuc2000}. For early contributions towards human lifetime modeling along these lines, see \cite{gavrilov1991}, and \cite{aalen1995phase} who particularly focussed on the modeling of hazard rates. The resulting class of distributions for the lifetime are known as \textit{phase-type distributions}. Phase-type (PH) distributions are particularly tractable and have been systematically studied as matrix distributions over the years (see \cite{neuts75,neuts1981matrix} and \cite{Bladt2017} for a survey). 
One of the most attractive features of PH distributions from a statistical perspective is that they are dense in the class of distributions on the positive real line in the sense of weak convergence, so that any given distribution on the positive half-line can be approximated arbitrarily well. However, for a good fit to a given histogram of data, often a large number of states (phases) are required, making the estimation procedure complex. Several estimation methods have been proposed in the literature, including the method of moments in \cite{bobbio2005matching,horvath2007matching} and references therein, Bayesian inference in \cite{bladt2003estimation}, and maximum likelihood estimation via the expectation-maximization (EM) algorithm in \cite{asmussen1996fitting} (for the right-censored case see \cite{olsson1996estimation}). The EM method considers the full likelihood arising from the path representation of a PH distribution and has been the most popular approach for many applications in applied probability. 

In \cite{lin2007markov}, the human aging process is modeled by such a PH distribution (concretely, a Coxian distribution). It turns out that for an adequate fit to data (the eventual lifetimes), about 200-250 phases are needed (see also \cite{asmussen2019}, where about 50 phases are suggested in the context of pricing equity-linked insurance products, when the focus is more on the stability of eventual prices rather than the accuracy of the fit). \cite{cheng2020} introduce additional restrictions of the parameters for the fitting procedure. The main reason for the need of so many phases (here and in other applications) towards an adequate fit is typically due to the fact that all PH distributions have exponentially decaying tails, whereas the actual data do not have that property. Rather than using many phases to correct for that, \cite{albrecher2019inhomogeneous} proposed a framework of inhomogeneous Markov processes for the PH construction, where time evolves according to a deterministic transform of the original time, which can be targeted towards capturing the tail behavior of the resulting distribution via the transform rather than the introduction of additional states. The resulting class of  \textit{inhomogeneous phase-type distributions} (IPH) is still dense in the class of all distributions on the positive half-line and turns out to lead to adequate fits with substantially fewer phases (and consequently parameters) in various applications. The statistical estimation of IPH distributions was then subsequently developed in \cite{albrecher2020iphfit}. A number of IPH distributions can be seen as matrix extensions of other base distributions, where the latter determine the tail behavior, and the matrix-valued parameters improve the fit of the formerly scalar counterparts, cf.\ \cite{albrecher2019inhomogeneous}. 

The purpose of this paper is to study the potential of IPH distributions for the modeling of human mortality. In particular, we will look into matrix versions of the Gompertz distribution, which will give the needed additional flexibility for fitting observed mortality rates (hazard functions), yet keeping the number of phases (and consequently parameters) of the model reasonably low. Importantly, the resulting model will allow for an interpretation in terms of the above aging process through phases, with time being transformed according to a deterministic mechanism. The focus in the modeling will be on an adequate representation of the hazard rate. In that way, we combine the original approach of hazard rate modeling with a generalized version of \cite{lin2007markov}. The advantage of this combined approach is that the number of necessary phases for a {\color{black}visually acceptable} fit goes down from 200-250 to 10-12, and the approach does not have to be restricted to the sub-class of Coxian PH distributions. 

In a second step, we will then investigate (for the first time) regression of IPH distributions, which will lead a way to multi-population mortality modeling within the above framework of modeling the aging mechanism. In the life sciences, particular instances of PH distributions (namely Coxian distributions) were considered as survival regression models, within the accelerated failure time (AFT) specification, cf.\ \cite{mcgrory2009fully,rizk2019alternative,tang2012modeling} and references therein. The popularity of these models for several decades is the product of their natural interpretation in terms of states and their statistical efficacy. Markov Chain Monte Carlo (MCMC) methods seem to be particularly popular for this type of application. In this paper, we propose survival regression models based on general IPH distributions. Concretely, we propose a proportional intensities model for the regression, which contains a number of classical methods as special cases (such as the proportional hazards model for any parametric family, the AFT model and Coxian regression). The combination of inhomogeneous time transformation together with general sub-intensity matrices makes our model significantly more flexible both in terms of tails and hazard function shapes. Regressing on the inhomogeneity parameters also leads to matrix generalizations of models such as the one of \cite{hsieh2000sample}. {\color{black}A particular case of our specification was studied in \cite{bladt2021phase} for severity modeling, where right-censoring, aggregation, and regressing on the inhomogeneity parameters were not considered.}

To make the models practically useful, we adapt techniques developed in  \cite{asmussen1996fitting,olsson1996estimation} and \cite{albrecher2020iphfit} to obtain effective estimation procedures based on the EM algorithm. While the focus in this paper is on mortality modeling, the generality of the approach could, with corresponding adaptations of the techniques, also be of interest for other application areas. We would also like to stress that with its great flexibility comes a drawback, which is that our model does not and is not expected to outperform any individual survival regression model specification in the literature. In particular, the message that we aim to convey is that inhomogeneous Markov aging models are reaching a mature state where they actually are able to resemble many of the classical models when it comes to explaining real data. Prediction and other more delicate aspects of mortality modeling (in particular in connection with time effects and cohort effects) are supported but do not outperform the state-of-the-art {  machine learning methods. Nonetheless, our models enjoy favorable estimation and closed-form mathematical formulas for the resulting lifetime distribution. The latter is important to build realistic insurance products that can be analyzed explicitly, and thus the contribution is also targeted to researchers who want to use our model as a building block in their own research (with the advantage that they may calibrate them on the entire lifetime distribution). }


The structure of the remainder of the paper is as follows. In Section \ref{sec:mort}, we revisit the necessary existing background on IPH distributions with an emphasis on the intensity function, which plays a central role in the paper, and establish a number of explicit expressions in the context of mortality modeling with matrix distributions, addressing the main PH classes under consideration in the paper in more detail. Section \ref{sec:prop_int} is devoted to regression and proposes a proportional intensities model. Main properties are derived and it is shown that it is asymptotically equivalent to the proportional hazards model, but more flexible for our purposes. Section \ref{sec:fit-to-IPH} then develops the estimation methodology for the regression models via the EM algorithm. 
The presented methods and models are illustrated in Section \ref{sec:examples} for mortality data from Denmark, Japan, and the USA, both for the univariate case and the regression. In particular, it is shown that for these data dimensions, around 10-12 in the matrix framework are already reasonably competitive when for instance, compared to the classical Lee-Carter model. { For survival data, Section \ref{sec:cens} implements the right-censored version of the main algorithm.} Finally, Section \ref{sec:conclusion} concludes.


%
%

\section{Mortality rate modeling using matrix distributions}\label{sec:mort}
In this section, we recall some basic properties of inhomogeneous phase-type distributions that will be relevant later on. 
%
In the sequel, for a random {\color{black}variable} $X$, we write $X\sim F$ with $F$ a distribution function, density, or acronym, when $X$ follows the distribution uniquely associated with $F$. For two real-valued functions, $g,h$ the terminology $g(t)\sim h(t)$, as $t\to \infty$ is defined as $\lim_{t\to \infty}g(t)/h(t)=1$. Unless stated otherwise, equalities between random objects hold almost surely. 

{\color{black}We will denote by $\mu_f(s)$ the hazard (or mortality) rate of a lifetime random variable $X$ with density $f$}, so that
\begin{align}\label{eq:cdff}
	\overline{F}(x) = \exp \left( -\int_0^x \mu_f(s)ds  \right) 
\end{align} 
is the corresponding survival function, and $\mu_f (s)ds$ can be interpreted as the probability of dying in the time interval $[s,s+ds)$ given that the individual has survived up until time $s$. The variation of $\mu_f (s)$ with $s$ is linked to the aging process. 

{\color{black}In this paper, we make the assumption that
\[ \mu_f (x)=e^{\beta x} \,, \   \]  
for some $\beta>0$, that is, we have an underlying Gompertz hazard rate. For presentation purposes, and also because many of the formulas also work for other hazard rates, we will still often work abstractly with the symbol $\mu_f (x)$ in the sequel.}

{\color{black}Now let} $ ( J_t )_{t \geq 0}$ denote a time-inhomogeneous Mar\-kov jump process on a state-space $\{1, \dots,$ $p, p+1\}$, where states $1,\dots,p$ are transient and state $p+1$ is absorbing. In other words, $ ( J_t )_{t \geq 0}$ has an intensity matrix of the form
\begin{align*}
	\mat{\Lambda}(t)= \left( \begin{array}{cc}
		\bfT(t) &  \bft(t) \\
		\0 & 0
	\end{array} \right)\in\mathbb{R}^{(p+1)\times(p+1)}\,, \quad t\geq0\,,
\end{align*}
where $\bfT(t) $ is a $p \times p$ matrix function and $\bft(t)$ is a $p$-dimensional column vector function. Here, for any time $t\geq0$, $\bft (t)=- \bfT(t) \, \bfe$, where $\bfe $ is the $p$-dimensional column vector of ones. Let $ \pi_{k} = \P(J_0 = k)$, $k = 1,\dots, p$, $\bfpi = (\pi_1 ,\dots,\pi_p )$ be a probability vector corresponding to the initial distribution of the jump process. In particular, we have $\P(J_0 = p + 1) = 0$. Then we say that the time until absorption 
\begin{align*}
	\tau = \inf \{ t \geq  0 \mid J_t = p+1 \}\,,
\end{align*}
has an inhomogeneous phase-type distribution with representation $(\bfpi,\bfT(t) )$. 
For the purpose of this paper, the following particular case is of interest: $\bfT(t) = \lambda(t)\,\bfT$, where $\lambda(t)$ is some known non-negative real function, named the \emph{intensity function}, and {\color{black}$\bfT=\{t_{kl}\}_{1\le k,l\le p}$} is a sub-intensity matrix. {\color{black}Similarly, we let $-\mat{T}\bfe =\bft=\{t_k\}_{1\le k\le p}$.} {\color{black}In practice, we will usually obtain $\lambda$ as a hazard rate from a known distribution, for instance $\lambda=\mu_f$.} We  write $$\tau \sim  \mbox{IPH}(\bfpi , \bfT , \lambda )\,.$$

Note that for  $\lambda(t) \equiv 1$ one returns to the time-homogeneous case, which corresponds to the conventional phase-type distributions with notation $\mbox{PH}(\bfpi , \bfT )$; a comprehensive account of PH distributions can be found in \cite{Bladt2017}.
It is not difficult to show that if $Y \sim  \mbox{IPH}(\bfpi , \bfT , \lambda )$, then there exists a function $g$ such that \begin{equation}\label{gtrans}
Y \sim g(Z) \,,
\end{equation}where $Z \sim \mbox{PH}(\bfpi , \bfT )$.
 Specifically, $g$ is defined in terms of $\lambda$ by
\begin{equation*}
g^{-1}(y) = \int_0^y \lambda (s)ds \,,  \quad y\ge 0\, , \label{eq:transformation-g}
\end{equation*}
or, equivalently,
\begin{equation*}
\lambda (y) = \frac{d}{dy}g^{-1}(y) \,.  \label{eq:transformation-lambda} 
\end{equation*}
To avoid degeneracies, we assume that $g^{-1}(y)<\infty,\:\forall y\ge0$, and $\lim_{y\to\infty}g^{-1}(y)=\infty$.
The density $f_Y$ and survival function $\overline{F}_Y$ of $Y \sim  \mbox{IPH}(\bfpi , \bfT , \lambda )$ are given by
\begin{eqnarray}
 f_Y(y) &=& \lambda (y)\, \vect{\pi}\exp \left( \int_0^y \lambda (s)ds\ \mat{T} \right)\vect{t}\,,\quad y\ge0 \,, \label{eq:dens-IPH}\nonumber \\
 \overline{F}_Y(y)&=& \vect{\pi}\exp \left( \int_0^y \lambda (s)ds\ \mat{T} \right)\vect{e}\,,\quad y\ge0 \,. \label{sy} \label{eq:cdf-IPH}
\end{eqnarray}
For further properties on IPH distributions and motivation for their use in statistical modeling, we refer to \cite{albrecher2019inhomogeneous}. See also \cite{albrecher2020iphfit} for extensions to the multivariate case.

The tail behavior of IPH distributions can be derived from the corresponding tail behavior of a PH distribution. Recall that the survival function of a PH distribution is given by
\begin{align}\label{PHtail_expansion}
\overline{F}_Z(y)=\sum_{j=1}^{m} \sum_{l=0}^{\kappa_{j}-1}y^{l} \exp({\Re\left(-\eta_{j}\right) y })\left[a_{j l} \sin \left(\Im\left(-\eta_{j}\right) y \right)+b_{j l} \cos \left(\Im\left(-\eta_{j}\right) y \right)\right] \,,\quad y\ge0\,,
\end{align}
where $-\eta_j<0$ are the eigenvalues of the Jordan blocks $\mat{J}_{j}$ of $\bfT$, with corresponding dimensions $\kappa_j$, $j = 1,\dots, m$, and $a_{jl}$ and $b_{jl}$ are constants depending on $\bfpi$ and $\bfT$. If $-\eta$ is the largest real eigenvalue of $\bfT$ and $n$ is the dimension of the Jordan block of $\eta$, then it is easy to see from \eqref{PHtail_expansion} that
\begin{align}\label{PHtail_asymptotic}
	\overline{F}_Z(y)\sim c y^{n -1} \exp({-\eta y}) \,, \quad y \to \infty \,,
\end{align}
where $c$ is a positive constant. That is, all PH distributions have exponential tails with Erlang-like second-order bias terms. The tail behavior of $Y$ then follows from \eqref{gtrans}.

A number of IPH distributions can be expressed as classical distributions with matrix-valued parameters. For the representation of such distributions, we make use of functional calculus. If $\red\nu$ is an analytic function and $\mat{A}$ is a matrix, define 
\begin{align*}
	\red \nu( \mat{A})=\dfrac{1}{2 \pi i} \oint_{\Gamma}\nu(w) (w \mat{I} -\mat{A} )^{-1}dw \,,
\end{align*}
where $\Gamma$ is a simple path enclosing the eigenvalues of $\mat{A}$; cf.\ \cite[Sec.3.4]{Bladt2017} for details. In particular, the matrix exponential can be defined in this way.

{\color{black}The matrix distribution of particular interest for the mortality modeling in this paper is defined through the following survival function:}

{\color{black}\[   \overline{H}(x) = \vect{\pi} \exp \left( \mat{T} \int_0^x \mu_f(s)ds  \right)\vect{e}=\vect{\pi}e^{\mat{T}\left(e^{\beta x}-1  \right)/\beta} \vect{e} \,.  \]}
Here, $(\vect{\pi},\mat{T})$ is a $p$-dimensional PH representation (so that the underlying Markov process has $p$ transient states).
In view of \eqref{sy}, this is the survival function of the time until absorption of a time-inhomogeneous Markov process initiating according to $\vect{\pi}$ and with intensity matrix 
\begin{equation}
  \mat{\Lambda}(t) = \mu_f(t) \begin{pmatrix}
\mat{T} & \vect{t} \\
\vect{0} & 0
\end{pmatrix} \,. \label{eq:int-mat}
\end{equation}
Hence $\red H\sim\mbox{IPH}(\vect{\pi},\mat{T},\mu_f)$.
If $\mat{T}$ is one-dimensional taking the value $-1$, then $\red H$ coincides with $F$ {\color{black} in \eqref{eq:cdff}.}

This \emph{matrix-Gompertz} distribution  $\red H\sim \operatorname{mGompertz}(\beta,\vect{\pi},\mat{T})$ has tails much lighter than exponential. Observe that from an inhomogeneous Markov point of view, the sub-intensity matrix of the underlying jump process is given by $\mat{T}\exp({\beta x})$, which for scalar $\mat{T}$ leads back to the exponential hazard function of the Gompertz distribution, the latter being the classical model for human mortality over long age ranges
(see, e.g., \cite{macdonald2018modelling}). 

The density function $\red h(x)$ corresponding to $\red H$ is given by
{\color{black}\begin{align}\label{density_of_model}
h(x) = \vect{\pi} \exp \left( \mat{T} \int_0^x \mu_f(s)ds  \right)\vect{t} \mu_f(x)=e^{\beta x} \vect{\pi} e^{\mat{T}\left(e^{\beta x}-1\right) / \beta} \vect{t}\,,
\end{align}}
with corresponding hazard function 
\begin{equation}\label{rates}  \red\mu_h(x) = \mu_f(x) \frac{ \vect{\pi} \exp \left( \mat{T} \displaystyle \int_0^x \mu_f(s)ds  \right)\vect{t} }{ \vect{\pi} \exp \left( \mat{T} \displaystyle\int_0^x \mu_f(s)ds  \right)\vect{e}} \,. \end{equation}

\begin{remark}\normalfont
One way to justify this choice of model is to argue that  $\red \overline{H}(x)$ allows to improve a base model $\overline{F}(x)$ in a parsimonious way. Indeed, since IPH distributions are dense in the class of distributions on the positive real line (in the sense of weak convergence, cf.\ \cite{albrecher2019inhomogeneous}), one can obtain an arbitrarily close approximation to a given lifetime distribution by adding sufficiently many phases to the state space, and in contrast to the homogeneous case, this can typically be achieved with very few states when the base model is already a good first approximation. In addition, this modeling approach also gives rise to the additional interpretation that life (aging) goes through various  
phases, prone to an overall time-inhomogeneous variation represented by $\mu_f(t)$.
\end{remark}

Let {\red $(J_t)_{t\geq0}$ be} the underlying Markov jump process with intensity matrix \eqref{eq:int-mat}. Its transition probabilities are given by
\[ \color{black}  p_{kl}(s,t) =  \P (J_t=l|J_s=k) = \left[ e^{\mat{T}\int_s^t \mu_f(u)du} \right]_{kl} , \]
for $\red k,l=1,\dots,p$ being transient (non-absorbing) states.  
Now let $\red V_l$ denote the time spent in state $\red l$ during a person's life prior to death and let
\[  \red u_{kl} = \E_k (V_l) = \E (V_l | J_0=k)\,, \quad k,l=1,\dots,p \,, \]
with $\red \mat{U}=\{ u_{kl}\}_{k,l=1,\dots,p}$ denoting the corresponding matrix, which is referred to as the Green matrix. Then with $\red \tau \sim H$, 
{\red 
\begin{eqnarray*}
  u_{kl} &=& \E_k \left( \int_0^\tau 1\{  J_s=l \} ds \right)  \\
  &=&\int_0^\infty \P_k (J_s=l,\tau\geq s)ds \\
  &=&\int_0^\infty  \left[ e^{\mat{T}\int_0^s \mu_f(u)du} \right]_{kl}ds \,,
\end{eqnarray*}
and
\begin{equation}
  \mat{U} = \int_0^\infty e^{\mat{T}\int_0^s \mu_f(u)du} ds \,. \label{eq:Green}
  \end{equation}
  }
Again, for the one-dimensional case with $\mat{T}=-1$, we have that
\[  \mat{U} =  \int_0^\infty e^{-\int_0^s \mu_f(u)du} ds = \int_0^\infty \ov{F}(s)ds \, ,  \] 
the mean of $F$.
If $\mu_f(s)=1$, then $\mat{U}=(-\mat{T})^{-1}$. 

{\color{black} 
For a Gompertz hazard rate function $\mu_f(s)$, define a function $g$ in terms of its inverse by
 \begin{equation}
   g^{-1}(s) = \int_0^s \mu_f(u)du =\frac{1}{\beta}\left(  e^{\beta s} -1  \right), \label{eq:h-function} 
   \end{equation}
   or, equivalently, 
\[   g(s) = \frac{\log (\beta s+1)}{\beta} \,.
  \]}
{\red Note that $g^{-1}(s) \rightarrow +\infty$ as $s\rightarrow +\infty$.}
  Then for a $\eta>0$, using both substitution and partial integration, we get
 {\red\begin{eqnarray*} 
  \int_0^\infty e^{-\eta \int_0^s \mu_f(u)du} ds &=& \int_0^\infty e^{-\eta g^{-1}(s)}ds\\
   &=& \int_0^\infty e^{-\eta s}g^\prime (s)ds\\
    &=& \eta\int_0^\infty e^{-\eta s}g(s)ds\\
     &=& {\color{black}\eta L_g(\eta)=\eta\frac{{\mathrm e}^{\frac{\eta}{\beta}} \mathrm{Ei}_{1}\! \left(\frac{\eta}{\beta}\right)}{
\beta \eta}\,, }
\end{eqnarray*}}
where $L_g$ denotes the Laplace transform of $g$, and where 
\[  \mbox{Ei}_1(z) = \int_1^\infty u^{-1}e^{-uz}du  = \int_z^\infty u^{-1}e^{-u}du . \] Correspondingly, \eqref{eq:Green} reduces to
 \begin{equation}
 \red   \mat{U} = \int_0^\infty e^{\mat{T}\int_0^s \mu_f(u)du} ds = -\mat{T}L_g(-\mat{T}) = L_g(-\mat{T})(-\mat{T})\,  \label{eq:Green-result},
   \end{equation} 
   which can then be written as 
   \begin{eqnarray*}
\mat{U}&=& -\mat{T} \beta^{-1}(-\mat{T})^{-1}e^{-\beta^{-1}\mat{T}} \mbox{Ei}_1 \left( -\beta^{-1} \mat{T} \right)\\
&=&\beta^{-1}e^{-\beta^{-1}\mat{T}} \mbox{Ei}_1 \left( -\beta^{-1} \mat{T} \right),
\end{eqnarray*}
   since $-\mat{T}$ and $L_g(-\mat{T})$ commute. 
{\red More generally,} $\mat{U}$ can be expressed in an explicit form whenever $L_g$ can (the matrix function $L_g(-\mat{T})$ may be computed in different ways in practice, cf.\ \cite{Bladt2017}). {\color{black}We then have the following consequence of the above formulas:
\begin{lemma}
For $X\sim H$ we have
\[ \E (X)=\vect{\pi}L_g(-\mat{T})\vect{t}=\beta^{-1}\vect{\pi}e^{-\beta^{-1}\mat{T}}\mbox{Ei}_1\left( -\beta^{-1}\mat{T}\right)\mat{e}\,. \]
\end{lemma}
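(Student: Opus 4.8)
The plan is to compute $\E(X)$ through the tail formula $\E(X)=\int_0^\infty \overline{H}(x)\,dx$, valid since $X\ge 0$ almost surely, and then to recognize the resulting matrix integral as the Green matrix $\mat{U}$ of \eqref{eq:Green}. First I would write
\[
\E(X)=\int_0^\infty \overline{H}(x)\,dx=\int_0^\infty \vect{\pi}\exp\!\left(\mat{T}\int_0^x \mu_f(s)\,ds\right)\vect{e}\,dx ,
\]
and then pull the constant row vector $\vect{\pi}$ and column vector $\vect{e}$ outside the integral by linearity, so that $\E(X)=\vect{\pi}\left(\int_0^\infty e^{\mat{T}\int_0^x \mu_f(u)\,du}\,dx\right)\vect{e}=\vect{\pi}\,\mat{U}\,\vect{e}$, with $\mat{U}$ exactly the matrix appearing in \eqref{eq:Green}.

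Next I would invoke the identity \eqref{eq:Green-result}, namely $\mat{U}=L_g(-\mat{T})(-\mat{T})$, with $g$ the transform in \eqref{eq:h-function}. Since $\vect{t}=-\mat{T}\vect{e}$, this immediately yields $\E(X)=\vect{\pi}\,L_g(-\mat{T})(-\mat{T})\vect{e}=\vect{\pi}\,L_g(-\mat{T})\,\vect{t}$, which is the first claimed expression. For the second equality I would substitute the explicit closed form $\mat{U}=\beta^{-1}e^{-\beta^{-1}\mat{T}}\mbox{Ei}_1(-\beta^{-1}\mat{T})$ obtained just above the lemma (using that $-\mat{T}$ and $L_g(-\mat{T})$ commute, together with the scalar identity $\eta L_g(\eta)=e^{\eta/\beta}\mbox{Ei}_1(\eta/\beta)/\beta$ applied via functional calculus), which gives $\E(X)=\beta^{-1}\vect{\pi}\,e^{-\beta^{-1}\mat{T}}\mbox{Ei}_1(-\beta^{-1}\mat{T})\,\vect{e}$.

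The only point requiring a little care — and the main, rather minor, obstacle — is the interchange of the scalar integral with the vector products and the finiteness of $\E(X)$. Both are already guaranteed by the computations preceding the lemma: the derivation of \eqref{eq:Green-result} shows $\int_0^\infty e^{-\eta \int_0^s \mu_f(u)\,du}\,ds<\infty$ for every $\eta$ with positive real part, so, decomposing $\mat{T}$ into its Jordan blocks (whose eigenvalues all have strictly negative real part), every entry of $\int_0^\infty e^{\mat{T}\int_0^x \mu_f(u)\,du}\,dx$ converges absolutely; equivalently, by \eqref{gtrans} together with the PH tail asymptotics \eqref{PHtail_asymptotic}, $\overline{H}(x)$ decays faster than any exponential, so $\int_0^\infty \overline{H}(x)\,dx<\infty$ and Fubini/linearity applies. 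Everything else is a direct substitution.
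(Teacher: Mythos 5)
Your proposal is correct and follows essentially the same route as the paper: both arguments reduce the claim to $\E(X)=\vect{\pi}\mat{U}\vect{e}$ and then invoke the already-established identity \eqref{eq:Green-result} together with the explicit $\mbox{Ei}_1$ form of $\mat{U}$ and the relation $\vect{t}=-\mat{T}\vect{e}$. The only (cosmetic) difference is that you obtain $\E(X)=\vect{\pi}\mat{U}\vect{e}$ from the tail formula $\E(X)=\int_0^\infty\overline{H}(x)\,dx$ and the integral representation \eqref{eq:Green}, whereas the paper reads it off directly from the occupation-time decomposition of the lifetime; since the paper derives \eqref{eq:Green} from exactly that decomposition, the two justifications coincide.
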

\begin{proof}
The expected value of the distribution $H$ can be obtained as the weighted sum of conditional expected times in the different states, where the weights correspond to the initial probabilities. Hence
\[   \E (X) = \vect{\pi}\mat{U}\vect{e} = \vect{\pi}L_g(-\mat{T})(-\mat{T}\vect{e}) = \vect{\pi}L_g(-\mat{T})\vect{t} \,,\]
and we may use the other expression for $\mat{U}$ to complete the proof.
\end{proof}}

{\red Note that for $X\sim H$}, residual lifetimes given survival up to time $x$, say, are readily obtained by the survival function
\[   \P (X>x+y | X>x) = \frac{\vect{\pi} \exp \left(\mat{T} \int_0^x \mu_f(u)du  \right)}{\vect{\pi} \exp \left(\mat{T} \int_0^x \mu_f(u)du  \right)\vect{e}} \exp \left(\mat{T} \int_x^{x+y} \mu_f(u)du  \right) \vect{e} \,. \]
Hence the residual lifetime distribution given survival until time $x$ is again IPH with initial vector 
\[   \vect{\alpha}^x = \frac{\vect{\pi} \exp \left(\mat{T} \int_0^x \mu_f(u)du  \right)}{\vect{\pi} \exp \left(\mat{T} \int_0^x \mu_f(u)du  \right)\vect{e}}\,, \]
intensity matrix $\mat{T}$, and hazard function $\mu_f^x(u)=\mu_f(x+u)$.

{\red 
We now establish a relationship between the intensity function of an IPH distribution and its hazard function.
Note that the intensity function $\lambda$ of a $\mbox{IPH}(\bfpi , \bfT , \lambda )$ distribution is not the same as its hazard function, $\mu$. The latter is given by
	\begin{align*}
	\mu(t) = \lambda(t) \frac{\vect{\pi}\exp \left( \int_0^t \lambda (s)ds\ \mat{T} \right)\vect{t} }{\vect{\pi}\exp \left( \int_0^t \lambda (s)ds\ \mat{T} \right)\vect{e} }  \,.
	\end{align*}
This distinction is subtle, but leads to increased flexibility in the interplay of hazards between subgroups, as we will discuss in Section~\ref{sec:prop_int}. The cumulative hazard function of an $\mbox{IPH}(\bfpi , \bfT , \lambda )$ distribution reads 
\begin{align*}
	M(t) & =  \int_{0}^{t} \mu (s) ds =- \log \left(\vect{\pi}\exp \left( \int_0^t \lambda (s)ds\ \mat{T} \right)\vect{e}\right) \,.
\end{align*}

The next result shows that $\lambda$ is asymptotically equivalent to the hazard function in the upper tail. 
\begin{theorem}\label{aysmpt_proportional}
	Let $ Y \sim\mbox{IPH}(\bfpi , \bfT , \lambda )$. Then the hazard function $\mu$ and cumulative hazard function $M$ of $Y$ satisfy, respectively,  
	\begin{align*}
		\mu(t) &\sim c \lambda(t) \,, \quad t \to \infty \,,\\
		M(t)& \sim k g^{-1}(t) \,, \quad t \to \infty \,,
	\end{align*}
	where $g^{-1}(t)=\int_0^t \lambda(s)ds$, and $c,k$ are positive constants.
\end{theorem}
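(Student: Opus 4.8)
The plan is to reduce everything to the tail behaviour of the underlying homogeneous phase-type law $Z\sim\mbox{PH}(\bfpi,\bfT)$ via the time change $Y\sim g(Z)$ from \eqref{gtrans}. First I would record the exact identity linking the two hazard functions. Since $\overline{F}_Y(t)=\overline{F}_Z(g^{-1}(t))$ with $g^{-1}(t)=\int_0^t\lambda(s)\,ds$, differentiating $-\log\overline{F}_Y$ and using the chain rule gives
\[
\mu(t)=-\frac{d}{dt}\log\overline{F}_Z\!\left(g^{-1}(t)\right)=\lambda(t)\,\mu_Z\!\left(g^{-1}(t)\right),
\]
where $\mu_Z=f_Z/\overline{F}_Z$ is the hazard rate of $Z$; this is also consistent with the explicit formula for $\mu(t)$ stated just before the theorem, upon noting $\bfpi e^{\bfT g^{-1}(t)}\bft=f_Z(g^{-1}(t))$ and $\bfpi e^{\bfT g^{-1}(t)}\bfe=\overline{F}_Z(g^{-1}(t))$. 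Likewise $M(t)=-\log\overline{F}_Y(t)=-\log\overline{F}_Z(g^{-1}(t))$.

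Second, I would establish the key scalar fact that $\mu_Z(y)\to\eta$ as $y\to\infty$, where $-\eta$ is the largest real eigenvalue of $\bfT$ and $n$ the size of its Jordan block. From \eqref{PHtail_asymptotic} we already have $\overline{F}_Z(y)\sim c\,y^{n-1}e^{-\eta y}$. The density $f_Z(y)=\bfpi e^{\bfT y}\bft$ is itself an exponential polynomial of the same type as the right-hand side of \eqref{PHtail_expansion}, and differentiating that finite sum term by term shows its leading asymptotic term is $c\,\eta\,y^{n-1}e^{-\eta y}$ (the polynomial-degree-lowering contribution $(n-1)y^{n-2}e^{-\eta y}$ being negligible). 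Hence $\mu_Z(y)=f_Z(y)/\overline{F}_Z(y)\to\eta$.

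Third, I would compose the asymptotics with the divergent change of variables. Because we assume $g^{-1}(t)\to\infty$, we obtain $\mu(t)=\lambda(t)\,\mu_Z(g^{-1}(t))\sim\eta\,\lambda(t)$, which is the first assertion with $c=\eta$. For the cumulative hazard, taking logarithms in \eqref{PHtail_asymptotic} gives $-\log\overline{F}_Z(y)=\eta y-(n-1)\log y-\log c+o(1)\sim\eta y$ as $y\to\infty$; substituting $y=g^{-1}(t)\to\infty$ yields $M(t)=-\log\overline{F}_Z(g^{-1}(t))\sim\eta\,g^{-1}(t)$, i.e.\ $k=\eta$ (so in fact $c=k$).

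I expect the main obstacle to be the rigorous justification that $\mu_Z(y)\to\eta$ — specifically, that the leading term of the density is obtained by differentiating the leading term of the survival function. This is legitimate here only because $\overline{F}_Z$ and $f_Z$ are genuine exponential polynomials (finite linear combinations of terms $y^\ell e^{\Re(-\eta_j)y}$ times $\sin$ or $\cos$), so one differentiates an exact expression rather than an asymptotic relation; one also needs that the dominant eigenvalue $-\eta$ is real (which holds for sub-intensity matrices by Perron--Frobenius type arguments) so that no oscillatory term survives at top order. Everything else is a routine composition of asymptotics with the monotone divergent map $g^{-1}$.
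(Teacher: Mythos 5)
Your proof is correct and follows essentially the same route as the paper's: both reduce the claim to the exponential-polynomial tail asymptotics of the underlying $\mbox{PH}(\bfpi,\bfT)$ law (survival function and density sharing the dominant term $y^{n-1}e^{-\eta y}$ up to constants) and then compose with the divergent map $g^{-1}$. The only difference is that you additionally identify the constants as $c=k=\eta$ via $\mu_Z(y)\to\eta$, a correct refinement that the paper leaves as unspecified positive constants.
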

\begin{proof}
	 We have that if $Z\sim \mbox{PH}(\bfpi,\bfT)$, then its density $f_{Z}$ has a representation of the same form as \eqref{PHtail_expansion} (cf. \cite[Sec.4.1.1]{Bladt2017} for details), which only differs on the multiplicative constants. Thus,
\begin{align*}
\overline{F}_Y(y)&\sim c_1 [g^{-1}(y)]^{n -1} \exp\left({-\eta [g^{-1}(y)]}\right)\,,\\
f_Y(y) &\sim c_2 [g^{-1}(y)]^{n -1} \exp\left({-\eta [g^{-1}(y)]}\right)\frac{d}{dy} [g^{-1}(y)]\,,\
\end{align*}  
as $y \to \infty$, where  $-\eta$ is the largest real eigenvalue of $\bfT$, $n$ is the dimension of the Jordan block of $\eta$, and $c_1,c_2$ are positive constants. The first expression then follows by recalling that $\frac{d}{dy} [g^{-1}(y)]=\lambda(y)$. The second one can be shown in an analogous manner.
\end{proof}
}

\subsection{Special {\color{black} Coxian} sub-structures}
The analysis becomes particularly tractable when the underlying PH distribution $(\vect{\pi},\mat{T})$ has a simple form. {\red For example,}
%
 a  Coxian PH distribution can be written in the form
 \[  \vect{\pi}=(1,0,\dots,0), \ \ \mat{T}=
 \begin{pmatrix}
 -\lambda_1 & \nu_1\lambda_1 & 0 & \dots & 0 \\
 0 & -\lambda_2 & \nu_2\lambda_2  & \dots & 0 \\
 0 & 0 & -\lambda_3 & \dots & 0 \\
 \vdots & \vdots & \vdots & \vdots\vdots\vdots & \vdots\\
 0 & 0 & 0 & \dots & -\lambda_p
 \end{pmatrix} ,
    \]
    where all $\lambda_k$, $k=1,\dots,p,$ are distinct. The phase diagram for such a distribution is given by 
    \begin{center}
    	\setlength{\unitlength}{0.27mm}
    	\begin{picture}(400,90)
    		\put(-20,35){\framebox(30,30){$1$}}
    		\put(-5,35){\vector(0,-1){30}}
    		\put(435,35){\framebox(45,30){$p+1$}}
    		\put(-60,50){\vector(1,0){40}}  		
    		\put(10,50){\vector(1,0){60}}
    		\put(25,60){\tiny $\nu_1\lambda_1$}
    		\put(100,50){\vector(1,0){60}}
    		\put(115,60){\tiny$\nu_2\lambda_2$}
    		\put(70,35){\framebox(30,30){$2$}}
    		\put(0,20){\tiny$(1-\nu_1)\lambda_1$}
			\put(90,20){\tiny$(1-\nu_2)\lambda_2$}
			\put(85,35){\vector(0,-1){30}}
				\put(410,60){\tiny$\lambda_p$}
					\put(190,60){\tiny$\nu_{p-2}\lambda_{p-2}$}
				\put(250,35){\framebox(40,30){$p-1$}}
			\put(395,50){\vector(1,0){40}} 
			\put(290,50){\vector(1,0){75}}  
			\put(190,50){\vector(1,0){60}}  
			\put(165,35){$\cdots$} 		
			\put(300,60){\tiny$\nu_{p-1}\lambda_{p-1}$}
			\put(365,35){\framebox(30,30){$p$}}
			\put(275,20){\tiny$(1-\nu_{p-1})\lambda_{p-1}$}
				\put(270,35){\vector(0,-1){30}}
    	\end{picture}\end{center}
That is, the states are traversed sequentially, and at each state, there is a probability to exit to the absorption state $p+1$ (i.e., die) directly. In this case
    \[  (z\mat{I}-\mat{T})^{-1} =
 \begin{pmatrix}
 \frac{1}{z+\lambda_1} & \frac{\nu_1\lambda_1}{(z+\lambda_1)(z+\lambda_2)} & \frac{\nu_1\lambda_1\nu_2\lambda_2}{(z+\lambda_1)(z+\lambda_2)(z+\lambda_3)} & \cdots & \frac{\nu_1\lambda_1\cdots \nu_{p-1}\lambda_{p-1}}{(z+\lambda_1)(z+\lambda_2)\cdots (z+\lambda_p)} \\
 0 & \frac{1}{z+\lambda_2} & \frac{\nu_2\lambda_2}{(z+\lambda_2)(z+\lambda_3)} & \cdots & \frac{\nu_2\lambda_2\cdots \nu_{p-1}\lambda_{p-1}}{(z+\lambda_2)(z+\lambda_3)\cdots (z+\lambda_p)} \\
 0 & 0 & \frac{1}{z+\lambda_3} & \cdots & \frac{\nu_3\lambda_3\cdots\nu_{p-1} \lambda_{p-1}}{(z+\lambda_3)(z+\lambda_4)\cdots (z+\lambda_p)} \\
 \vdots & \vdots & \vdots & \vdots\vdots\vdots & \vdots\\
 0 & 0 & 0 & \cdots & \frac{1}{z+\lambda_p}
 \end{pmatrix},
      \]
  so with $\nu_p=0$ we get the density
  \begin{eqnarray*}
 \vect{\pi} \exp(\mat{T}x)\vect{t}&=&\frac{1}{2\pi  i }\int_\gamma \exp(s)\vect{\pi}(s\mat{I}-x \mat{T})^{-1}\vect{t}d s  \\
 &=&  \frac{1}{2\pi  i }\int_\gamma \exp(s) \left[\sum_{k=1}^p
 \frac{(\lambda_1\nu_1x)(\lambda_{2}\nu_{2}x)\cdots (\lambda_{k-1}\nu_{k-1}x)\lambda_k(1-\nu_k)}{(s+\lambda_1x)(s+\lambda_{2}x)\cdots (s+\lambda_kx)}
  \right]d s \\
 &=& \sum_{k=1}^p\left(\lambda_k(1-\nu_k)\prod_{m=1}^{k-1}\lambda_m\nu_m \right)x^{k-1}
 \frac{1}{2\pi  i }\int_\gamma  \frac{\exp(s)}{(s+x\lambda_1)\cdots (s+x\lambda_k)}d s \\
 &=&\sum_{k=1}^p\left(\lambda_k(1-\nu_k)\prod_{m=1}^{k-1}\lambda_m\nu_m \right)x^{k-j}
 \sum_{m=1}^k \frac{\exp(-\lambda_m x)}{\displaystyle\prod_{\stackrel{n=1}{n\neq m}}^k (-x \lambda_m+x \lambda_n)}\\
 &=&  \sum_{k=1}^p\left(\lambda_k(1-\nu_k)\prod_{m=1}^{k-1}\lambda_m\nu_m \right)
 \sum_{m=1}^k \frac{\exp(-\lambda_m x)}{\displaystyle\prod_{\stackrel{n=1}{n\neq m}}^k (\lambda_n-\lambda_m)} \,.
 \end{eqnarray*}  

\begin{remark}\normalfont
If $\vect{\pi}$ is relaxed to be any initial probability vector (i.e., the Markov jump process can also start in other states than the first one), then one speaks of a {\em generalized Coxian distribution}. While for the modeling of the aging process as discussed in Section \ref{sec:int} the classical Coxian construction is particularly appealing as it has the direct interpretation of traversing through the $p$ stages of aging, but with a positive probability to die 'prematurely' during one of these states, the generalized Coxian construction is also of interest. In particular, if it leads to a much better fit than the classical Cox construction, it may suggest a significant heterogeneity in the population under consideration.
\end{remark}

%

\section{Regression with proportional intensities}\label{sec:prop_int}
In a next step, we now introduce regression in our PH modeling framework. Regression models for life tables were cast into the scene by the introduction of the proportional hazards models, cf.\ \cite{cox1972regression}, and these models have remained popular, both in their parametric and non-parametric forms up to today. In the logarithmic scale, the specification implies, however, parallel log-mortality curves, which are typically not observed in datasets for which a joint mortality modeling is of interest.  We, therefore, look for introducing regression in a less restrictive way, while still allowing for a direct interpretation of the regression coefficients. This is achieved by letting the intensity function of our IPH distribution (rather than the hazards) vary proportionally with the covariates. As we will see later, this leads to a more flexible model, which is asymptotically equivalent to the proportional hazards construction. A physical interpretation of this construction is that each population subgroup has a natural clock running at a proportional speed relative to other subgroups. 
 
We, therefore, propose a regression model for IPH distributions with representation $\mbox{IPH}(\bfpi , \bfT , \lambda )$, where the intensity $ \lambda ( \,\cdot\, ; \vect{\theta} )$ is  a non-negative parametric function depending on the vector $\vect{\theta}$ and incorporate the predictor variables $\bfX = (X_1, \dots, X_d) $
 by specifying
\begin{align}\label{prop_intens_spec}
	\lambda(t \mid  \bfX , \vect{\theta} )=\lambda ( t ;  \vect{\theta} ) m(\bfX \vect{\beta} ) ,\quad t\ge0.
\end{align}
Here $ \vect{\beta}$ is a $d$-dimensional column vector and $m$ is any {\color{black} positive-valued and }measurable function. We call this model the {\em proportional intensities} (PI) model. The interpretation in terms of the underlying inhomogeneous Markov structure is that time is changed proportionally for a given subgroup when time-transforming the associated PH distribution into an IPH one.  In the sequel, we will indistinctly denote by $\mat{X}$ a vector of covariates at the population level, or a matrix of covariates for finite samples. In the latter case, $\mat{X}_i$ denotes the $i$-th row of the matrix, corresponding to the $i$-th individual in the sample.

Note that the density $f$ and survival function $\overline{F}$ of a random variable with distribution $$ \mbox{IPH}(\bfpi , \bfT , \lambda(\,\cdot\, \mid \bfX , \vect{\theta}) )$$ are given by
 \begin{eqnarray*}
 f(y) &=& m(\bfX \vect{\beta} ) \lambda (y; \vect{\theta})\, \vect{\pi}\exp \left( m(\bfX \vect{\beta} )\int_0^y \lambda (s; \vect{\theta})ds\ \mat{T} \right)\vect{t} \,, \label{eq:dens-IPH} \\
 \overline{F}(y)&=& \vect{\pi}\exp \left(m(\bfX \vect{\beta} ) \int_0^y \lambda (s; \vect{\theta})ds\ \mat{T} \right)\vect{e} \,.  \label{eq:cdf-IPH}
\end{eqnarray*}

\begin{remark}\rm
Since the exponential function $m(x)=\exp(x)$ in the above specification is the most common choice (for instance, the one used in the original Cox proportional hazards model, \cite{cox1972regression}), the PI model draws its name from that particular example. Other choices also go under the proportional name, see for instance \cite{royston2002flexible}. {\color{black} In terms of coefficients, the interpretation of $\vect{\beta}$ is akin to the parameters in a regular Cox regression, except that the effects here are on the intensity of a Markov process, which is only asymptotically equivalent (but not equal) to the hazard rate.}
\end{remark}

\begin{remark}\rm
Within the above setup, it is possible to add dependence of $g$ on $\mat{X}$ also. That is, we may consider the model
\begin{align}\label{double_prop_intens_spec}
	\lambda(t \mid  \bfX , \vect{\gamma} )=\lambda ( t ;  \vect{\theta}(\mat{X}\vect{\gamma}) ) m(\bfX \vect{\beta} ),\quad t\ge0,
\end{align}
where $\vect{\theta}(\mat{X}\vect{\gamma})$ is a vector-valued function, mapping the score $\mat{X}\vect{\gamma}$ to the parameter space of $\lambda$. For instance, in the one-dimensional case, $$\vect{\theta}(\mat{X}\vect{\gamma})=\exp(\gamma_0+\mat{X}\vect{\gamma})$$is a natural choice. In the sequel, any $\vect{\theta}$ specification may be of the form $\vect{\theta}(\mat{X}\vect{\gamma})$, and the notation may be used interchangeably, when there is no risk of confusion. {\color{black} In this specification, the role of $\vect{\theta}$ is to regulate the Gompertz time-transformation of the underlying Markov dynamics. The incorporation of covariates thus means that shortening or elongating virtual times is necessary to describe the differences observed in the data for individuals of differing characteristics.}
\end{remark}


%
%

\begin{remark}\rm
	In the proportional hazards model, one assumes that the hazard function satisfies
\begin{align*}
\red 	\mu(t \,|\, \bfX )=\mu(t) \exp(\bfX \vect{\beta} )  \,,
\end{align*}
with $\red \mu(t) $ referred to as the {\em baseline hazard function} or the {\em hazard function for a standard subgroup} (with $\bfX \vect{\beta} =0$). Common parametric models for $\red \mu(t)$ in the context of mortality modeling are the exponential function (referring to the Gompertz assumption) and other more refined approximations employed for the fitting of empirical mortality curves (cf.\ \cite{kostaki1992}, \cite{macdonald2018modelling}). 
\end{remark}

One of the advantages of the PI model is that the implied hazard functions between different subgroups can deviate from proportionality in the body of the distribution (for $\mbox{dim}(\mat{T})>1$), as can be observed in Section \ref{sec:examples}. This addresses one of the common practical drawbacks of the proportional hazards model. 
If $p=1$, we have that $\mu(t) = c \lambda(t)$ for a constant $c>0$, so that the PI specification reduces to the proportional hazards model for this case. 

%

Another important property of the PI model is that a random  variable following this specification can be expressed as a functional of a PH random variable. More specifically, consider $Y \sim \mbox{IPH}(\bfpi , \bfT , \lambda(\,\cdot\, \mid \bfX , \vect{\gamma}) )$ and let $g(\, \cdot \,|\, \bfX, \vect{\gamma})$ be defined in terms of its inverse function  
$$g^{-1}(y \,|\, \bfX, \vect{\gamma})=\int_{0}^{y}\lambda(s \,|\, \bfX, \vect{\gamma})ds = m(\bfX \vect{\beta} )\int_{0}^{y}\lambda(s; \vect{\theta}(\bfX\vect{\gamma}))ds\,,$$ 
 Then it is not hard to see that
\begin{align*}
	Y \eqd g(Z\,|\, \bfX, \vect{\gamma}) \,,
\end{align*}
for any $Z\sim \mbox{PH}(\bfpi, \bfT)$.
Conversely, 
\begin{align}\label{eq:PHrep}
	Z=g^{-1}(Y \,|\, \bfX, \vect{\gamma}) \sim \mbox{PH}(\bfpi , \bfT )\,.
\end{align}
 This distributional property is a key observation for the estimation procedure proposed in Section~\ref{sec:fit-to-IPH}.

Next, we concentrate on two important distributions which, in the presence of covariates, generalize the commonly employed models for parametric proportional hazards modeling: the exponential and the Weibull baseline hazard models. 


\begin{example}[PH regression] \rm 
	Consider $\lambda \equiv 1$, thus $\lambda(t \,|\,  \bfX )= m(\bfX \vect{\beta} ) $ for all $t >0$. The survival function takes the form 
	\begin{align*}
		\overline{F}(y)= \vect{\pi}\exp \left(m(\bfX \vect{\beta} ) \mat{T} y \right)\vect{e} \,. 
	\end{align*}
Note also that $Y \eqd {Z}/{m(\bfX \vect{\beta} )}$, where $Z \sim \mbox{PH}(\bfpi, \bfT)$. In particular, this implies that 
\begin{align*}
	\E (Y) = \frac{\E(Z)}{m(\bfX \vect{\beta} )}  = \frac{\bfpi (- \bfT)^{-1} \bfe}{m(\bfX \vect{\beta} )}  \, .
\end{align*}
\end{example}

\begin{example}[Matrix-Weibull regression] \rm 
	Consider $\lambda(t) = \theta t^{\theta-1}$, thus $\lambda(t \, | \,  \bfX )= m(\bfX \vect{\beta} ) \theta t^{\theta-1} $ for all $t >0$. The survival function takes the form 
	\begin{align*}
		\overline{F}(y)= \vect{\pi}\exp \left(m(\bfX \vect{\beta} ) \mat{T} y^{\theta} \right)\vect{e} \,. 
	\end{align*}
Note also that $Y \eqd ( Z/m(\bfX \vect{\beta} ))^{1/\theta} = m(\bfX \vect{\beta} )^{-1/\theta} Z^{1/\theta}$, where $Z \sim \mbox{PH}(\bfpi, \bfT)$. The expression for the expected value is then given by 
\begin{align*}
	\E (Y) = m(\bfX \vect{\beta} )^{-1/\theta}  \E (  Z^{1/\theta})  = m(\bfX \vect{\beta} )^{-1/\theta} \Gamma(1 + 1/\theta)  \bfpi (- \bfT)^{-1/\theta} \bfe.
\end{align*}
\end{example}

\begin{remark}\rm 
For $p=1$, we recover the conventional proportional hazards models with exponential and Weibull baseline hazards.
	Moreover, both cases have the interpretation that the expected values are proportional among subgroups. That is, for arbitrary matrix dimension, when the baseline hazard is PH or matrix-Weibull, the PI model is equivalent to the \textit{accelerated failure time} (AFT) model (cf. \cite{pike1966method} for the Weibull case, and more generally, the log-location scale model in \cite{lawless2011statistical}). It is also not hard to see that homogeneous functions $g$ 
	simultaneously satisfy both models.
	However, for our purposes the AFT model is not of primary interest, since it regresses proportionally on the means instead of on the underlying intensity, and the physical interpretation  in terms of hidden Markov models would typically be lost.
\end{remark}
\section{Parameter estimation}\label{sec:fit-to-IPH}
To estimate the hazard rate of a distribution, there are several approaches one may take. It is common in mortality modeling to consider the problem as a regression on the mortality or log-mortality curve, as a function of age, and with an appropriate loss function. However, for PH distributions, such an approach can only be used if very good (or near-optimal) initial parameters are supplied. Otherwise, local maxima and boundaries of the parameter space are almost unavoidable, and fitting times may be very long. 

With that in mind, we present in this section two estimation approaches for the PI model. The first one is the target approach, which deals with the mortality curve directly and is a general procedure, here developed without covariates. It can also be seen as an alternative to the parametric mortality curve fitting in the spirit of \cite{heligman1980}, \cite{kostaki1992}, now for the hazard rate of a matrix-Gompertz distribution. The second is a maximum-likelihood approach via the EM algorithm which provides a fast way of obtaining good initial parameters for the first method\footnote{\color{black}However, also note that if the modeler is interested in tail probabilities instead of mortality estimates, it is much preferable to exclusively use the EM algorithm to directly estimate the density function. The alternative, that is, obtaining survival estimates through the integrated hazard, will incur model errors that propagate from younger into older ages.}$^{,}$\footnote{{\color{black}The initial parameters of the EM algorithm itself are of lesser concern since the routine is much faster and more robust. In all our examples, for the EM algorithm, we randomly simulated a valid PH structure (initial vector summing to one, and sub-intensity matrix having negative diagonal and rows summing to less than zero) as initial parameters, and the number of EM iterations was $2000$, which was sufficient for the final changes in likelihood to be less than $0.01\%$ in all cases.}}. The main conceptual difference between the two approaches is that the first one gives equal weight to all ages, while the second gives weights proportional to their abundance in the dataset.

\subsection{Direct mortality modeling} 
We define a global loss function $\ell$ as the aggregated age-specific losses $L$, which is a function of the observed and fitted mortality curves, as follows: 
\begin{align*}
\ell(\vect{\pi},\mat{T},f | \vect{\mu})=\sum_{x=0}^N L\left(\mu_h(x),\mu_x\right),
\end{align*}
where $\mu_x$ is the observed mortality at age $x$ and $\mu_h$ is given by \eqref{rates}. Several choices for $L$ have been proposed, arising either from graphical considerations (for instance, Euclidean or $L_p$ norms, absolute or relative differences) or from probabilistic considerations on the counts of the life table itself (for instance, Poisson, cf. \cite{brostrom1985practical}, or other count distributions such as Binomial or Negative Binomial). The empirical work for the present manuscript yielded that the sturdiest choice for the PI model is given by $L(\mu,\nu)=(\log(\mu)-\log(\nu))^2$. In that case, we may apply the following decomposition:
\begin{align*}
\ell(\vect{\pi},\mat{T},f | \vect{\mu})=\sum_{x=0}^N (\log(\mathcal{C}(x|\vect{\pi},\mat{T},f))+\log(\mu_f(x))-\log(\mu_x))^2,
\end{align*}
with the correction factor
\begin{align*}
\mathcal{C}(x|\vect{\pi},\mat{T},f)=\frac{ \vect{\pi} \exp \left( \mat{T} \displaystyle \int_0^x \mu_f(s)ds  \right)\vect{t} }{ \vect{\pi} \exp \left( \mat{T} \displaystyle\int_0^x \mu_f(s)ds  \right)\vect{e}}.
\end{align*}
It is noteworthy to remark that  although this method will yield visually better estimation of the mortality (or log-mortality) curves, the resulting PI model will have a lower likelihood than the one arising from the EM algorithm.

\subsection{Maximum likelihood estimation}

We now present estimation via maximum-likelihood (ML) for the PI model. We consider the estimation procedure for the general setting where the parameters of the inhomogeneity transform are also regressed, incurring in no additional mathematical complexity.

In many applications, a large proportion of the data is not entirely observed, or censored. In this paper, we consider the case of right-censoring, since it arises naturally in the context of its applications in survival analysis \footnote{{ 
We would like to remark that for the mortality applications considered in the sequel, the EM algorithm for right-censored data will not be used in its full strength, since our data will come from 1x1 mortality tables. However, Section \ref{sec:cens} illustrates the implementation of the right-censored algorithm for survival data.
}}
The main difference is that we no longer observe exact data points $Y = y$, but only $Y \in (v, \infty )$, conditionally on the corresponding covariate information. In particular, by monotonicity of $g$ we have that $g^{-1}(Y ; \vect{\theta} )\in ( g^{-1}(v ; \vect{\theta} ), \infty )$, which can be interpreted as a censored observation of a random variable with conventional PH distribution. Some of the formulas from \cite{olsson1996estimation} are useful to adapt an expectation-maximization (EM) algorithm for IPH distributions in the presence of covariates and censored data.

Let $(z_1,\delta_1), \dots, (z_N, \delta_N)$ be a possibly right-censored i.i.d.\ sample from of a PH distributed random variable $Z$ with parameters $(\bfpi , \bfT )$, where the $\delta_i$, $i=1,\dots,N$ are binary indicators taking the value $1$ if case of observation and $0$ in case of right-censoring. 
Now, for each $k,l\in\{1,\dots,p\}$, let $B_k$ be the number of times that the underlying jump-process $(J_t)_{t\geq0}$ initiates in state $k$, $N_{kl}$ the total number of jumps from state $k$ to $l$, $N_k$ the number of times that we reach the absorbing state $p+1$ from state $k$ and let $V_k$ be the total time that the underlying Markov jump process spends in state $k$ prior to absorption. These statistics are hidden, in the sense that $Z$ alone cannot account for them.
If they were not hidden, and given a sample of absorption times $\vect{z}$, the completely observed likelihood $\mathcal{L}_c$ can be written in terms of these sufficient statistics as follows:
\begin{equation}
\mathcal{L}_c( \bfpi , \bfT ;\vect{z})=
\prod_{k=1}^{p} {\pi_k}^{B_k} \prod_{k=1}^{p}\prod_{l\neq k} {t_{kl}}^{N_{kl}}e^{-t_{kl}V_k}\prod_{k=1}^{p}{t_k}^{N_k}e^{-t_{k}V_k},   
\end{equation}
which can be brought into the canonical form of the exponential dispersion family of distributions, which possess explicit maximum likelihood estimators.

However, since the full data is not observed, we employ the EM algorithm to obtain the ML estimators. At each iteration the E-step consists in computing the conditional expectations of the sufficient statistics $B_k$, $N_{kl}$, $N_k$ and $V_k$ given that $Z=z$, for the fully observed case, and given that $Z>z$ for the right-censored case. The M-step maximizes  $\mathcal{L}_c( \bfpi , \bfT ,\vect{z})$ using the estimates of the sufficient statistics from the previous step, obtaining updated parameters $( \bfpi , \bfT )$.

Below we write the formulas needed for the E- and M-steps for a sample of size $N$. We denote by ${\bfe}_k$ the column vector with all elements equal to zero besides the $k$-th entry which is equal to one, that is, the $k$-th element of the canonical basis of $\mathbb{R}^d$.

We now consider a sample of size $N$. For any two sets, $A,B$ define their product $A\times B=\{(a,b)|\,a\in A,\,b\in B\}$, with the definition extending inductively to the case of more than two sets. Define the following product set $\mathcal{Z}=\mathcal{Z}(\vect{z})=\prod_{i=1}^N A_i$, where $A_i=\{z_i\}$ if $\delta_i=1$, and $A_i=(z_i,\infty)$ otherwise. Then it can be shown that the conditional expectations can be written as follows:

\begin{enumerate} 
\item[ 1)]\textit{E-step, conditional expectations:} 
\begin{align*}
    \mathbb{E}(B_k\mid \mat{Z}\in\mathcal{Z})=\sum_{i=1}^{N}\left\{ \delta_i\frac{\pi_k {\bfe_k}^{ \mathsf{T}}\exp( \bfT z_i) \bft }{ \bfpi \exp( \bfT z_i) \bft }+(1-\delta_i)\frac{ \pi_{k} \bfe_{k}^{\mathsf{T}} \exp\left(\mat{T} z_i\right)\bfe } { \bfpi \exp\left(\mat{T} z_i\right)\bfe}\right\},
\end{align*}
\begin{align*}
 \mathbb{E}(V_k\mid\mat{Z}\in\mathcal{Z})&=\sum_{i=1}^{N} \left\{\delta_i\frac{\int_{0}^{z_i}{\bfe_k}^{ \mathsf{T}}\exp( \bfT (z_i-u)) \bft  \bfpi \exp( \bfT u)\vect{e}_kdu}{ \bfpi \exp( \bfT z_i) \bft } \right.\\
  &\quad\quad \left. + (1-\delta_i)\frac{\int_{0}^{z_i}{\bfe_k}^{ \mathsf{T}}\exp( \bfT (z_i-u)) \bfe  \bfpi \exp( \bfT u)\vect{e}_kdu}{ \bfpi \exp( \bfT z_i) \bfe }\right\},
\end{align*}
\end{enumerate}
\begin{align*}
\mathbb{E}(N_{kl}\mid\mat{Z}\in\mathcal{Z})&=\sum_{i=1}^{N}t_{kl} \left\{\delta_i \frac{\int_{0}^{z_i}{\bfe_l}^{\mathsf{T}}\exp( \bfT (z_i-u)) \bft \, \bfpi \exp( \bfT u)\vect{e}_kdu}{ \bfpi \exp( \bfT z_i) \bft } \right.\\
&\quad\quad \left.+ (1-\delta_i) \frac{\int_{0}^{z_i}{\bfe_l}^{\mathsf{T}}\exp( \bfT (z_i-u)) \bfe \, \bfpi \exp( \bfT u)\vect{e}_kdu}{ \bfpi \exp( \bfT z_i) \bfe }\right\},
\end{align*}
\begin{align*}
\mathbb{E}(N_k\mid \mat{Z}\in\mathcal{Z})=\sum_{i=1}^{N} \delta_it_k\frac{ \bfpi  \exp( \bfT z_i){\bfe}_k}{ \bfpi \exp( \bfT z_i) \bft }.
\end{align*}
\begin{enumerate}

\item[2)] \textit{M-step, explicit maximum likelihood estimators:} 
\begin{align*}
\hat \pi_k=\frac{\mathbb{E}(B_k\mid \mat{Z}\in\mathcal{Z})}{N }, \quad \hat t_{kl}=\frac{\mathbb{E}(N_{kl}\mid \mat{Z}\in\mathcal{Z})}{\mathbb{E}(V_{k}\mid \mat{Z}\in\mathcal{Z})}
\end{align*}
\begin{align*}
\hat t_{k}=\frac{\mathbb{E}(N_{k}\mid \mat{Z}\in\mathcal{Z})}{\mathbb{E}(V_{k}\mid \mat{Z}\in\mathcal{Z})},\quad \hat t_{kk}=-\sum_{s\neq k} \hat t_{ks}-\hat t_k.
\end{align*}
We set $$ \hat{\bfpi}=(\hat{\pi}_1, \dots,\: \hat{\pi}_p)^{},\quad \hat{\bfT} =\{{ \hat{t}}_{kl}\}_{k,l=1,2,\dots,p},\quad\hat{\bft}=( \hat{t}_1,\dots,\: \hat{t}_p)^{\mathsf{T}}.$$
\end{enumerate}
Note that the contributions from the conditional expectation of $N_{k}$ given that $Z>z_i$ are zero, since absorption has not yet taken place. 

For convenience, a summary of the entire procedure is provided in Algorithm \ref{alg;IPHreg}.

\begin{algorithm}[]
\caption{Full EM algorithm for the Proportional Intensities (PI) model}\label{alg;IPHreg}
\begin{algorithmic}
\State \textit{\textbf{Input}: positive data points $\vect{y}=(y_1,y_2,\dots,y_N)^{\mathsf{T}}$, censoring indicators $(\delta_1,\delta_2,\dots,\delta_N)^{\mathsf{T}}$, covariates $\vect{x}_1,\dots,\vect{x}_N$, and initial parameters $( \bfpi , \bfT , \vect{\beta},\vect{\gamma})$}\\
\begin{enumerate} 
\item[ 1)]\textit{Transformation:} Transform the data into $z_i=g^{-1}(y_i \,|\, \vect{x}_i, \vect{\gamma})$, $i=1,\dots,N$.

\item[ 2)]\textit{E-step:} Define the set $\mathcal{Z}=\mathcal{Z}(\vect{z})=\prod_{i=1}^N A_i$, where $A_i=\{z_i\}$ if $\delta_i=1$, and $A_i=(z_i,\infty)$ otherwise. The, compute the statistics

\begin{align*}
    \mathbb{E}(B_k\mid \mat{Z}\in\mathcal{Z})=\sum_{i=1}^{N}\left\{ \delta_i\frac{\pi_k {\bfe_k}^{ \mathsf{T}}\exp( \bfT z_i) \bft }{ \bfpi \exp( \bfT z_i) \bft }+(1-\delta_i)\frac{ \pi_{k} \bfe_{k}^{\mathsf{T}} \exp\left(\mat{T} z_i\right)\bfe } { \bfpi \exp\left(\mat{T} z_i\right)\bfe}\right\},
\end{align*}
\begin{align*}
 \mathbb{E}(V_k\mid\mat{Z}\in\mathcal{Z})&=\sum_{i=1}^{N} \left\{\delta_i\frac{\int_{0}^{z_i}{\bfe_k}^{ \mathsf{T}}\exp( \bfT (z_i-u)) \bft  \bfpi \exp( \bfT u)\vect{e}_kdu}{ \bfpi \exp( \bfT z_i) \bft } \right.\\
  &\quad\quad \left. + (1-\delta_i)\frac{\int_{0}^{z_i}{\bfe_k}^{ \mathsf{T}}\exp( \bfT (z_i-u)) \bfe  \bfpi \exp( \bfT u)\vect{e}_kdu}{ \bfpi \exp( \bfT z_i) \bfe }\right\},
\end{align*}
\begin{align*}
\mathbb{E}(N_{kl}\mid\mat{Z}\in\mathcal{Z})&=\sum_{i=1}^{N}t_{kl} \left\{\delta_i \frac{\int_{0}^{z_i}{\bfe_l}^{\mathsf{T}}\exp( \bfT (z_i-u)) \bft \, \bfpi \exp( \bfT u)\vect{e}_kdu}{ \bfpi \exp( \bfT z_i) \bft } \right.\\
&\quad\quad \left.+ (1-\delta_i) \frac{\int_{0}^{z_i}{\bfe_l}^{\mathsf{T}}\exp( \bfT (z_i-u)) \bfe \, \bfpi \exp( \bfT u)\vect{e}_kdu}{ \bfpi \exp( \bfT z_i) \bfe }\right\},
\end{align*}
\begin{align*}
\mathbb{E}(N_k\mid \mat{Z}\in\mathcal{Z})=\sum_{i=1}^{N} \delta_it_k\frac{ \bfpi  \exp( \bfT z_i){\bfe}_k}{ \bfpi \exp( \bfT z_i) \bft }.
\end{align*}

\item[3)] \textit{M-step: let} 
\begin{align*}
\hat \pi_k=\frac{\mathbb{E}(B_k\mid \mat{Z}\in\mathcal{Z})}{N }, \quad \hat t_{kl}=\frac{\mathbb{E}(N_{kl}\mid \mat{Z}\in\mathcal{Z})}{\mathbb{E}(V_{k}\mid \mat{Z}\in\mathcal{Z})}
\end{align*}
\begin{align*}
\hat t_{k}=\frac{\mathbb{E}(N_{k}\mid \mat{Z}\in\mathcal{Z})}{\mathbb{E}(V_{k}\mid \mat{Z}\in\mathcal{Z})},\quad \hat t_{kk}=-\sum_{l\neq k} \hat t_{kl}-\hat t_k.
\end{align*}

\item[4)] Maximize
	\begin{align*}
	(\hat{\vect{\beta}},\hat{ \vect{\gamma}})  & = \argmax_{( \vect{\beta}, \vect{\gamma})}\left\{ \sum_{i=1}^N  \delta_i\log (\overline{F}_{Y}(y_i; \hat{\bfpi}, \hat{\bfT} ,\vect{\beta}, \vect{\gamma})) +(1- \delta_i)\log (f_{Y}(y_i; \hat{\bfpi}, \hat{\bfT} ,\vect{\beta}, \vect{\gamma})) 
 \right\}\\
 & = \argmax_{(\vect{\beta},\vect{\gamma})} \left\{\sum_{i=1}^N  \delta_i\log\left(
 m(\vect{x}_i \vect{\beta} ) \lambda (y;  \vect{\theta}(\vect{x}_i \vect{\gamma} ) )\, \hat{\bfpi}\exp \left( m(\vect{x}_i \vect{\beta} )\int_0^y \lambda (s;  \vect{\theta}(\vect{x}_i \vect{\gamma} ) )ds\ \hat{\mat{T}} \right)\hat{\vect{t}} 
 \right) \right.\\
 &\quad\quad\quad\quad\quad\quad\left. 
 +(1- \delta_i)\log\left(
  \hat{\bfpi}\exp \left( m(\vect{x}_i \vect{\beta} )\int_0^y \lambda (s;  \vect{\theta}(\vect{x}_i \vect{\gamma} ) )ds\ \hat{\mat{T}} \right)\vect{e} 
 \right)
 \right\}.
\end{align*}

\item[5)] Update the current parameters to $({\bfpi},\mat{T} ,\vect{\beta},\vect{\gamma}) =(\hat{{\bfpi}},\hat{\mat{T}},\hat{\vect{\beta}},\hat{\vect{\gamma}})$. Return to step 1 unless a stopping rule is satisfied.
\end{enumerate}
    \State \textit{\textbf{Output}: fitted parameters of the proportional intensities model $( \bfpi , \bfT , \vect{\beta},\vect{\gamma})$.}
\end{algorithmic}
\end{algorithm}

%
%
%
%

\begin{remark}\rm 
	Algorithm~\ref{alg;IPHreg} generalizes the EM algorithm for IPH distributions in \cite{albrecher2020iphfit} in the following two settings: i) presence of censored observations; and ii) incorporation of covariate information. 
	
\end{remark}
 
	\begin{remark}\rm
	The E-steps require the computation of matrix exponentials which can be performed in multiple ways \citep{moler1978nineteen}. In \cite{asmussen1996fitting}, this is done by converting the problem into a system of ODEs, which are then solved via a Runge-Kutta method of fourth-order (a C implementation, called EMpht, is available online \cite{olsson1998empht}). We employed a new implementation in Rcpp of such a method for the illustration here and everywhere else in the paper. An important consideration is that the Runge-Kutta method requires the sample to be ordered. This is relevant  since
the transformed data of the above algorithm, $z_i=g^{-1}(y_i \,|\, \vect{x}_i, \vect{\gamma})$ will in general not be ordered anymore, given that covariate information varies across subjects.  Thus, intermediate ordering steps have to be introduced before proceeding to the numerical implementation of the EM algorithm of \cite{olsson1996estimation} by the Runge-Kutta method. The re-ordering also modifies the ties in the data, which must be accounted for. Additionally, the evaluation of the likelihood in Step 2 may be done using the by-products of Step 1, which implies that re-ordering and re-weighting will also be necessary at each iteration within the optimization inside Step 2.
These additional steps pose a significant computational burden relative to the non-covariate algorithm.
The uniformization method \citep{albrecher2020iphfit} is an alternative, but the comparison of the two methods' performance is beyond the scope of this paper.
\end{remark}

The following assertion is a general property of any EM algorithm.

\begin{proposition}
Employing Algorithm \ref{alg;IPHreg}, the likelihood function increases at each iteration. Since the likelihood is bounded for fixed $p$, it is guaranteed to converge to a (possibly local) maximum.
\end{proposition}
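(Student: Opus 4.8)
The plan is to regard one sweep of Algorithm~\ref{alg;IPHreg} as the composition of two parameter updates and to verify that neither decreases the observed-data likelihood $L(\bfpi,\bfT,\vect{\beta},\vect{\gamma})$ of the sample, in which an exactly observed lifetime contributes the density $f_Y$ and a right-censored one the survival function $\overline{F}_Y$ of $\mbox{IPH}(\bfpi,\bfT,\lambda(\,\cdot\mid\bfX,\vect{\gamma}))$. Monotonicity of $L$ along the iterations, combined with the assumed boundedness of $L$ for fixed $p$, then forces the sequence of likelihood values to converge, since a nondecreasing bounded real sequence has a limit.

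\emph{First update (items~1)--3)).} Fix $(\vect{\beta},\vect{\gamma})$, and hence the transformed sample $z_i=g^{-1}(y_i\mid\vect{x}_i,\vect{\gamma})$ produced in item~1). Because each $g(\,\cdot\mid\vect{x}_i,\vect{\gamma})$ is strictly increasing, \eqref{eq:PHrep} shows that an exact observation $Y=y_i$ corresponds to $Z=z_i$ with $f_Y(y_i)=\lambda(y_i\mid\vect{x}_i,\vect{\gamma})f_Z(z_i)$, while a right-censored observation $Y>y_i$ corresponds to $Z>z_i$ with $\overline{F}_Y(y_i)=\overline{F}_Z(z_i)$, where $Z\sim\mbox{PH}(\bfpi,\bfT)$. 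Hence $L(\bfpi,\bfT,\vect{\beta},\vect{\gamma})=J(\vect{\beta},\vect{\gamma})\,L_Z(\bfpi,\bfT;\vect{z})$, where $L_Z$ is the right-censored phase-type likelihood of $(\vect z,\text{censoring pattern})$ — the incomplete-data likelihood whose completion is $\mathcal{L}_c$ — and $J(\vect{\beta},\vect{\gamma})$, the product of the Jacobian factors $\lambda(y_i\mid\vect{x}_i,\vect{\gamma})$ over the exact observations, does not depend on $(\bfpi,\bfT)$. Items~2) and~3) are precisely the right-censored phase-type E- and M-steps of \cite{olsson1996estimation}: since $\mathcal{L}_c$ lies in the exponential family, the M-step formulas $\hat\pi_k,\hat t_{kl},\hat t_k$ are the exact maximizers over $(\bfpi',\bfT')$ of $Q(\bfpi',\bfT'\mid\bfpi,\bfT)=\E_{\bfpi,\bfT}[\log\mathcal{L}_c(\bfpi',\bfT';\cdot)\mid\mat{Z}\in\mathcal{Z}]$, the conditional expectations of $B_k,N_{kl},N_k,V_k$ being substituted in. Writing $\log L_Z(\bfpi',\bfT')=Q(\bfpi',\bfT'\mid\bfpi,\bfT)-H(\bfpi',\bfT'\mid\bfpi,\bfT)$ with $H$ the entropy term, Gibbs' inequality gives $H(\hat\bfpi,\hat\bfT\mid\bfpi,\bfT)\le H(\bfpi,\bfT\mid\bfpi,\bfT)$ while the M-step gives $Q(\hat\bfpi,\hat\bfT\mid\bfpi,\bfT)\ge Q(\bfpi,\bfT\mid\bfpi,\bfT)$; hence $L_Z(\hat\bfpi,\hat\bfT;\vect z)\ge L_Z(\bfpi,\bfT;\vect z)$, and multiplying by the unchanged $J(\vect{\beta},\vect{\gamma})$ yields $L(\hat\bfpi,\hat\bfT,\vect{\beta},\vect{\gamma})\ge L(\bfpi,\bfT,\vect{\beta},\vect{\gamma})$.

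\emph{Second update (item~4)) and conclusion.} Item~4) maximizes $(\vect{\beta}',\vect{\gamma}')\mapsto L(\hat\bfpi,\hat\bfT,\vect{\beta}',\vect{\gamma}')$ — this is exactly the objective displayed there, since reinstating $J(\vect{\beta}',\vect{\gamma}')$ together with the phase-type likelihood at the correspondingly transformed data reconstitutes $L$ — so, $(\vect{\beta},\vect{\gamma})$ being feasible, $L(\hat\bfpi,\hat\bfT,\hat{\vect{\beta}},\hat{\vect{\gamma}})\ge L(\hat\bfpi,\hat\bfT,\vect{\beta},\vect{\gamma})$. (In fact monotonicity needs only that item~4) increase, not maximize, $L$.) Chaining the two inequalities shows that $L$ is nondecreasing along the iterations; being bounded above for fixed $p$, the resulting sequence of likelihood values converges, and under the usual regularity conditions the associated limit point is a stationary point of $L$, in general only a local maximum (cf.\ \cite{asmussen1996fitting} and standard EM theory).

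The point requiring care — and the reason the argument is split into two coordinate moves rather than a single EM step — is the coupling introduced by item~1): the transformed data $z_i$ depend on $\vect{\gamma}$, so the E-step is consistent only with the \emph{current} $(\bfpi,\bfT)$ and the \emph{current} $\vect{\gamma}$, and the EM inequality may be invoked only with $(\vect{\beta},\vect{\gamma})$ held fixed. This is precisely what makes it essential for item~4) to optimize the genuine observed-data likelihood, with the Jacobian factor $J$ retained, rather than a surrogate defined on the phase-type scale; once these two observations are in place, the remaining verifications are routine.
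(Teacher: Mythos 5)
Your proof is correct and follows the same route the paper implicitly takes: the paper simply asserts the result as ``a general property of any EM algorithm'' without writing out a proof, and your argument is the careful instantiation of that standard monotonicity argument for this particular generalized EM scheme. Your identification of the one genuinely delicate point --- that the transformed data $z_i$ depend on $\vect{\gamma}$, so the scheme must be read as coordinate ascent (a GEM step on $(\bfpi,\bfT)$ at fixed $(\vect{\beta},\vect{\gamma})$ via the factorization $L=J(\vect{\beta},\vect{\gamma})\,L_Z$, followed by exact maximization of the observed-data likelihood over $(\vect{\beta},\vect{\gamma})$) --- is exactly right and supplies the detail the paper omits.
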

\section{Examples}\label{sec:examples}
All examples in this section were preliminarily fitted with the \texttt{c++} wrapper functions \texttt{fit} (for the no-covariates case) and \texttt{reg} (for the PI model), available in the \texttt{matrixdist} package in \texttt{R}, cf. \cite{bladt2021matrixdist,matrixdist}. Subsequently, the direct mortality modeling approach with $L(\mu,\nu)=(\log(\mu)-\log(\nu))s^2$ was implemented in the same language, and optimized with the \texttt{fortran} wrapper function \texttt{optim}.

The preliminary fitting was done using as log-likelihood weights the implied density from the mortality rates (death to exposure ratio), and using the midpoints between ages as the observed ages (corresponding to the data $\vect{y}$). An interval-censored approach would be slightly more accurate, but is outside the scope of the current paper, since we are using the EM procedure only to identify good initial parameters. For numerical purposes, age was divided by $100$ at the time of fitting, and all given parameters are on that scale. All plots are scaled back to the original size. All data was obtained from \url{https://www.mortality.org/}. {\color{black} In terms of running times, all analyses were completed individually within two minutes to 1.5 hours\footnote{{\color{black} All computations were done in a standard MacBook Pro (15-inch, 2017) Laptop with a 2.8 GHz Quad-Core Intel Core i7 processor and a 16 GB 2133 MHz LPDDR3 memory.}}, with the more complex examples being lengthier to optimize.}

\subsection{Danish female mortality without covariates}
We first consider the modeling of Danish females since the year $2000$ using a generalized Coxian structure as well as a Coxian structure. The distinction is made since the former seems to provide better fitted models, while the latter has a possibly more intuitive interpretation in terms of state transitions (as all individuals start in the first state).

The number of phases was chosen empirically. The left panel of Figure \ref{Danish_females} shows with a dashed line the $p=1$ case, corresponding to a classical Gompertz distribution. Thereafter, increasing dimensions of $\bfT$ were checked, for $p\le 20$. At some point, additional parameters do not provide increased performance\footnote{{\color{black}Increased performance in this context means a substantial decrease in loss function value upon convergence for each model of varying dimensions. Information criteria such as AIC and BIC are not helpful here, since the effective degrees of freedom of PH models are generally unknown (see also \cite{Alaric2022}). Thus we chose a relative decrease of less than $0.1\%$ as the stopping rule.}}, and we choose the last $p$ before this occurs. We observe a satisfactory fit, and the parameters for the generalized Coxian case are given by\footnote{Here, and in subsequently reported parameters, zeros are only exactly zero when the structure implies it (such as a Coxian structure). Else, they represent the rounding of a small number.}
\begin{align*}
\bfpi&=(0.01, 0, 0.04, 0.07, 0, 0.23, 0.06, 0.25, 0.32, 0, 0, 0, 0, 0.01, 0),\\
\mbox{Diag}(\bfT)&=-(3.14,1.81,95.92,25.29,24.27,2.98,0.03,0,17.19,0.13,1.23\times 10^{12},7.28,3.29,0.01),\\
\bft&=(0, 0.42, 1.77, 0, 0.01,  0,  0, 0, 0, 10.16, 0.13, 492.2, 0, 0, 0.01)^\prime,\\
\beta&=10.68 .
\end{align*}

\begin{figure}[!htbp]
\centering
\includegraphics[width=0.49\textwidth]{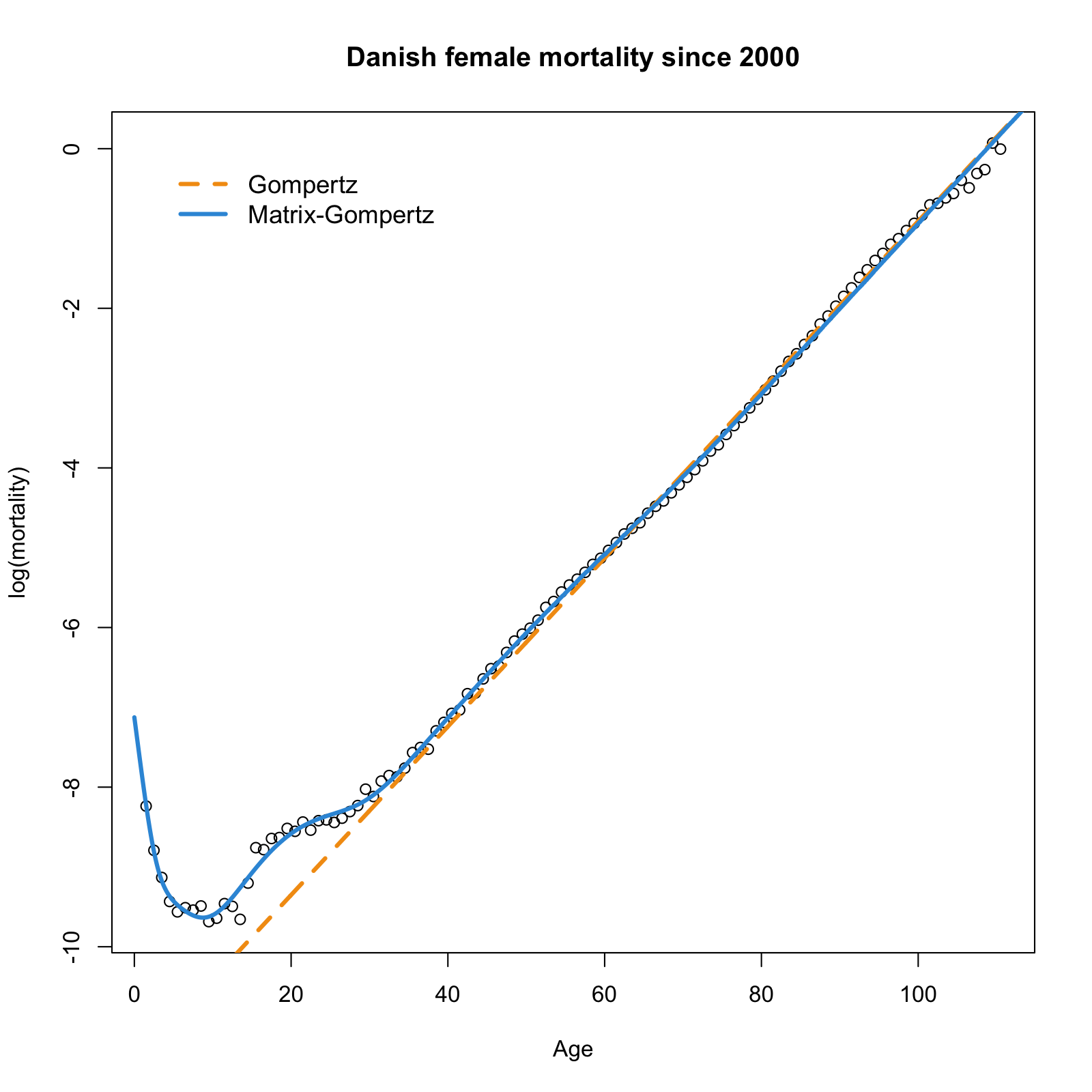}
\includegraphics[width=0.49\textwidth]{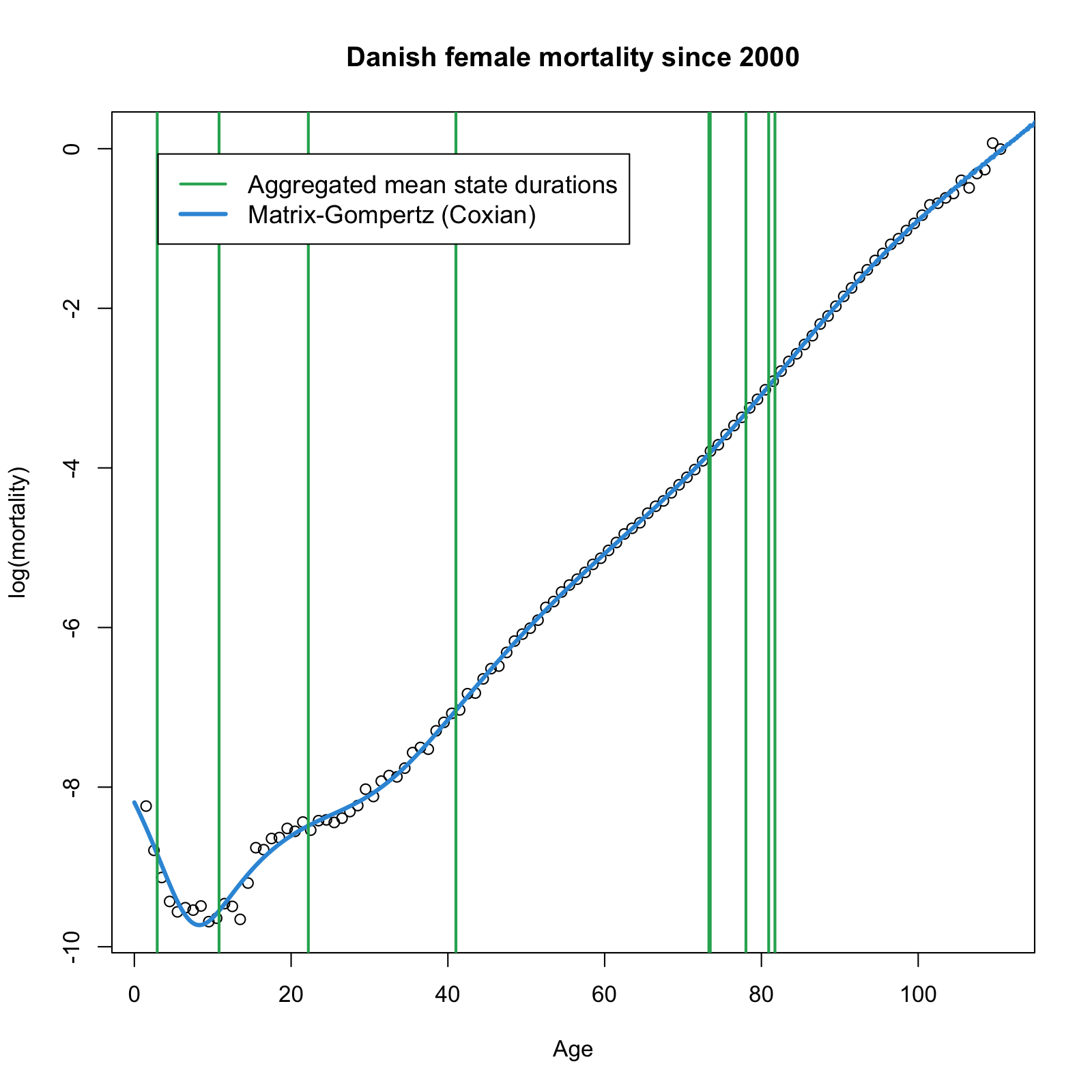}
\caption{Fitted matrix-Gompertz distributions to Danish female mortality data (generalized Coxian PH on the left, Coxian PH on the right).
} \label{Danish_females}
\end{figure}
Note that for the classical Gompertz fit on the left-hand side one obtains $\beta=10.55$, but one is not able to take into account the humps at the left end at all. On the other hand, the additional flexibility of the generalized Coxian can accomplish that, maintaining a  similar value $\beta=10.68$ for the transformation.
	
While the choice of the generalized Coxian is motivated by the tractability of the corresponding estimation procedure (it is much more tractable than fitting a general PH distribution) within the purpose of obtaining a good statistical fit, the restriction to a classical Coxian distribution (starting in state 1 with probability 1) opens up the interpretability in terms of stages of aging as discussed in Section \ref{sec:int}. The right panel of Figure \ref{Danish_females} depicts the resulting Coxian fit to the same dataset, together with the cumulative aggregated means of the sojourn times for each state of the underlying state space (here 10 states). In the Coxian construction, these are traversed sequentially, and thus the last vertical line is equal to the life expectancy. The parameters and line locations are given by
\begin{align*}
\bfpi&=\vect{e}_1,\\
\mbox{Diag}(\bfT)&=-(27.72, 5.95, 1.84, 0.29, 0.02, 2.32, 2.88\times 10^{11},0.04, 0.03, 0.04),\\
\bft&=(0.03,  0,   0.01,   0,   0.01,   0, 189.94,   0.01,   0.01,   0.04)^\prime,\\
\mbox{lines}&=(2.92, 10.81, 22.20, 41.02, 73.30, 73.45, 73.45, 78.02, 80.93, 81.73),
\\
\beta&=7.85.
\end{align*}
One can see that the wavy behavior of the log-mortality curve is also captured quite well with a pure Coxian PH distribution. Recall that the Coxian construction allows for the interpretation that life traverses through different states until death (absorption), staying at each one for a random duration, and with many individuals going through all the states, whereas some individuals die (go to absorption) from an earlier state directly. From the above figures, one readily computes the probabilities of dying while being in state $i$ ($i=1,\ldots,10$):
$$(0.001, 0, 0.003, 0.006, 0.326, 0.001, 0, 0.217, 0.428,1).$$ The probability of reaching state $p=10$ before death (i.e., going through all states) here is $0.299$.
Observe that some states have a very short duration, which can serve as an artifact to generate possible deaths at these concrete points in time. Note also that the Coxian matrix-Gompertz leads to a nice fit, but 'needs' a different value $\beta=7.85$ for the transformation than the classical Gompertz distribution in order to fit the humps at the left end. The generalized Coxian matrix-Gompertz then is flexible enough again to return to a similar $\beta$-value as the classical Gompertz, which is determined by the right end. From the concrete resulting parameters, one sees that there is a high probability to start in state 5, 7 or 8 only, so that the resulting model can be interpreted in terms of more flexible mixtures of Coxian distributions with lower state space. 




\subsection{Female mortality with country as a covariate}
Let us now move on to apply our approach to multi-population fitting with regression. As a first illustration of the PI model, we consider the mortality curves from two countries with substantially different life expectancies, the United States of America and Japan. Thus, the covariate is a binary one: USA $=1$ and Japan $=0$. We study the female population in the time interval $2000-2010$ (the observations are aggregated). 
The same approach for selecting the order of the fitted model as above was employed for this dataset. Figure \ref{Japan_USA_females} shows that although there is a slight misfit at the teenage years, the overall shape of both fitted curves resemble closely the observed mortality. Here we have $p=12$ states and the parameters are given by
\begin{align*}
\bfpi&=(0.07, 0.03, 0, 0, 0, 0.15, 0, 0.05, 0.24, 0.45, 0, 0),\\
\mbox{Diag}(\bfT)&=-(0.02, 0.06, 22.37,   8.78,   0.20,   1.37,   1.22,  44.17, 404.86,   0, 31.44, 0),\\
\bft&=(0,  0, 13.89,  0.25,  0,  0,  0.02,  0,  0,  0,  0.35,  0)^\prime,\\
\vect{\beta}&=(0.91),\\
\vect{\theta}&=(2.28,-0.07).
\end{align*}

\begin{figure}[!htbp]
\centering
\includegraphics[width=0.7\textwidth]{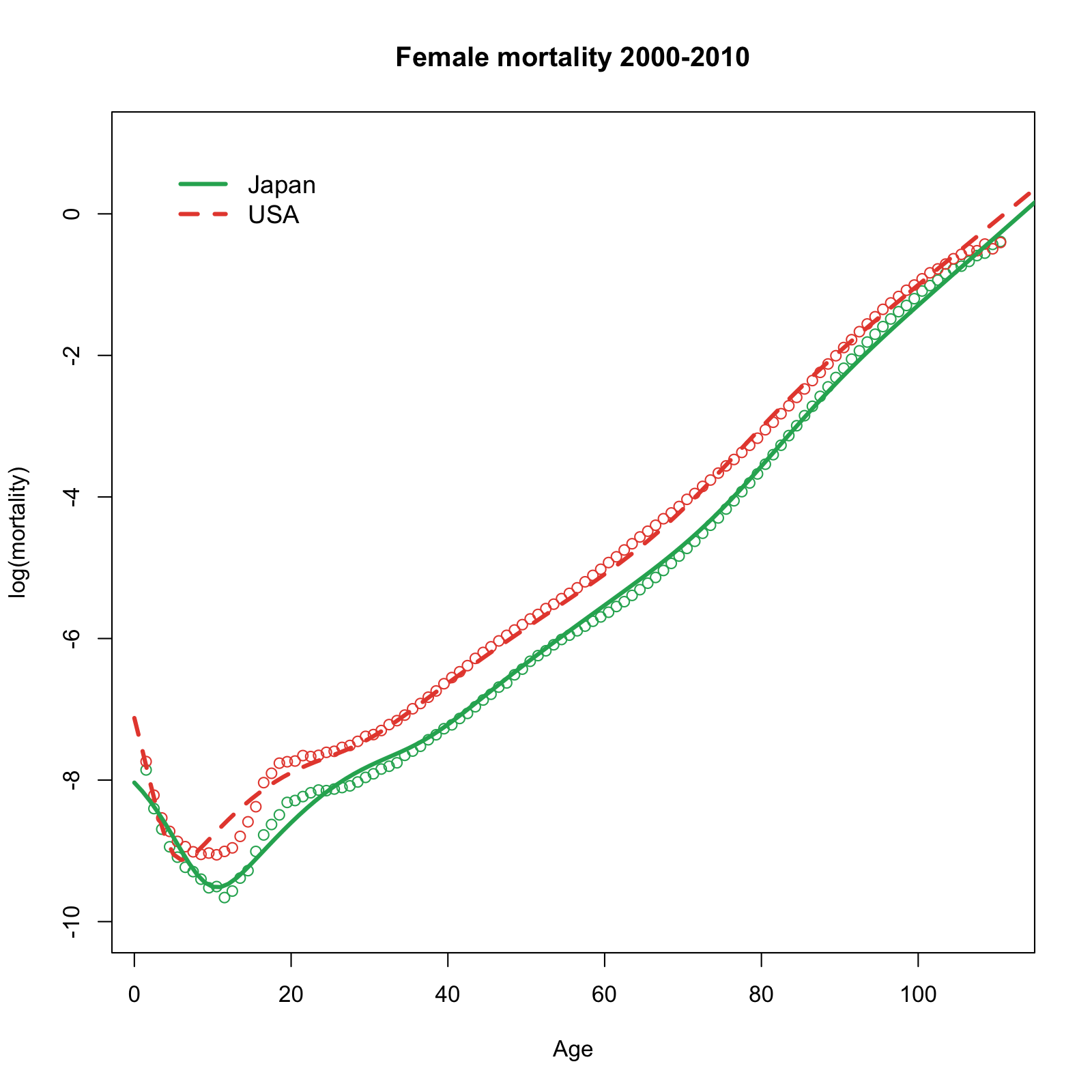}
\caption{ PI model applied to country as a covariate.
} \label{Japan_USA_females}
\end{figure}

The positive value of $\vect{\beta}$ indicates an increased underlying intensity (and thus shorter sojourn times) for the USA population, which is multiplicative with factor $\exp(0.91)\approx 2.48$. On the other hand, the parameters of $\vect{\theta}$ indicate that there is a baseline Gompertz factor of size $\exp(2.28)\approx 9.78$, and the USA population has a multiplicative decrease of this factor by $\exp(-0.07)$, or roughly $93\%$, incurring in an overall Gompertz factor of size $9.16$. The latter means in terms of the PI model that there is an ``environment" lifetime generating Gompertz process which is unfavorable for the Japanese female population, but their underlying slower traversing through the multi-state process (of their lives) makes up for it, to the degree that can be observed in Figure \ref{Japan_USA_females}.

Theoretically, by Theorem \ref{aysmpt_proportional}, both curves are asymptotically parallel, and although we see a substantial difference at most ages, the gap closes and is quite small at very advanced ages. This observation can be made from the data itself, implying that old-age mortality in both countries seems to be quite similar. However, the takeaway from the hidden Markov point of view is that the flexibility gained from considering intensities instead of hazard rates is advantageous for modeling more delicate inter-curve features in regression analyses.

\subsection{Danish female mortality with time as a covariate}
As a second example of the PI model, let us consider the Danish female population during the interval $1950-2000$, and examine the effect of time when considering it as a numerical covariate. For numerical convenience, we actually use the transformed covariate $t^\ast=(\mbox{year}-1950)/1000$. The same approach as in the two previous examples is used to select the size $p$ of the underlying state space. 

The proportional hazards model is not well suited for this task (the resulting log-mortality curves for different time periods would be parallel!), and so we use the Lee-Carter (LC, cf. \cite{lee1992modeling}) model as a benchmark. The LC model specifies the log-mortality by 
\begin{align*}
\log(\mu_{x,t})=a_x+b_x k_t+\epsilon_{x,t},
\end{align*}
where $\epsilon_{x,t}$ are Gaussian random variables. The $a_x$ term is estimated as the average log-mortality over time at each age $x$, and then $b_x$ and $k_t$ are computed from a singular value decomposition of $\log(\mu_{x,t})-a_x$. The procedure is non-parametric and thus can capture very minute changes in individual mortality curves across different ages.

On the other hand, by Theorem \ref{aysmpt_proportional} the PI model asymptotically satisfies, as $x$ grows,
\begin{align*}
\log(\mu_{x,t})\to a +b_{x,t}+ k_t,
\end{align*}
where $a=\log(c)$ is a constant which depends on the parameters $\bfpi$ and $\bfT$, $b_{x,t}=\log(\lambda(x;\exp(\theta_0+\theta_1 t^\ast)))=\exp(\theta_0+\theta_1 t^\ast)\log(x)$ (in our case) is the logarithm of the time-dependent Gompertz intensity, and $k_t=\beta_1 t^\ast$ (also in our case) is the logarithm of the proportionality factor between the underlying intensities. For smaller $x$ (before convergence to the limit) the log-mortality does not have such a simple decomposition, and is instead given by
\begin{align}
	\log(\mu_{x,t})=b_{x,t} &+ \log\left(\vect{\pi}\exp \left( \int_0^x  \exp(b_{s,t}+k_t)ds\ \mat{T} \right)\vect{t} \right)\label{mortality_time_ph}\\
	& -\log\left(\vect{\pi}\exp \left( \int_0^x \exp(b_{s,t}+k_t)ds\ \mat{T} \right)\vect{e} \right)  \,.\nonumber
	\end{align}

Figure \ref{DNK_females_time} shows the resulting fitted PI and LC models for the selected years $1960$, $1980$, and $2000$, where in both cases we observe a downward trend in mortality every $20$ years. Both models have room for improvement, but overall show accurate descriptions. The PI model seems to not capture a surge in mortality for old ages in $1960$, while the LC model does not seem to capture the decreases of mortality from $1980$ to $2000$ for young ages. The non-parametric nature of the LC model is sturdier to outliers as is the case for an old-age point of $1960$, while the PI seeks to compromise large parts of the curve for this one observation. However, observe that the LC model has some wiggling occurring for ages $0$ to $40$, which may indicate a slight degree of overfitting, a known issue for smaller populations and one which the PI model seems to handle better in this case. 

The parameters of the PI model are as follows:
\begin{align*}
\bfpi&=(0.31, 0.15, 0.17, 0.11, 0.12, 0.01, 0, 0.01, 0.04, 0.03, 0.04, 0),\\
\mbox{Diag}(\bfT)&=-(36.47, 0.09,   1.3,   0.06,   9.49,   5.34,   0.19,   2.25, 139.68,   0.22, 0.89,   0.66),\\
\bft&=(0, 0, 0, 0, 0, 0.17, 0, 0, 4.16, 0, 0, 0.66)^\prime,\\
\vect{\beta}&=(-22.54),\\
\vect{\theta}&=(1.92,2.94).
\end{align*}

\begin{figure}[!htbp]
\centering
\includegraphics[width=\textwidth]{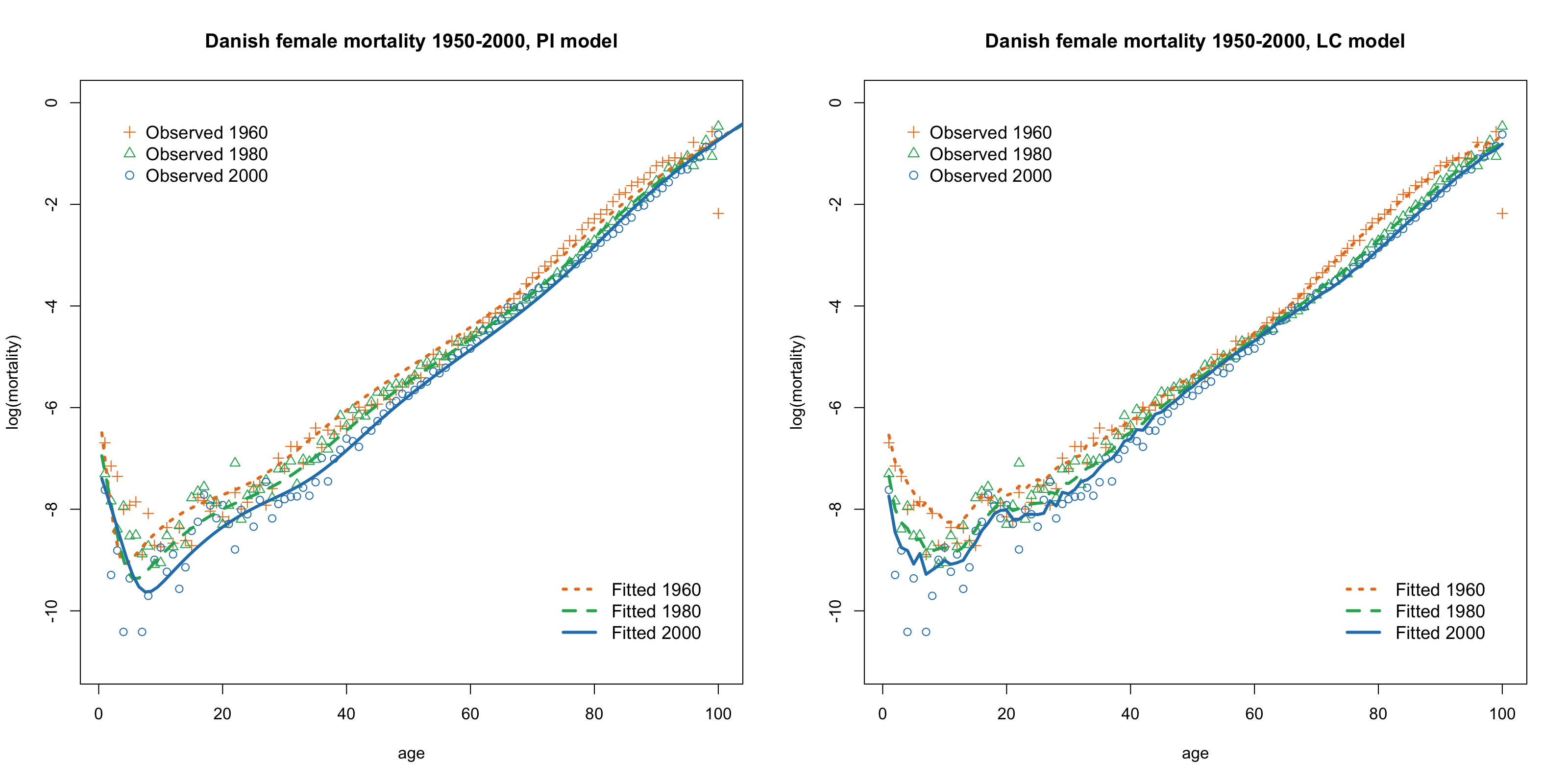}
\caption{ PI model using time as a covariate, plotted for {\color{black} the years 1960, 1980 and 2000, and for the Danish females dataset}.
} \label{DNK_females_time}
\end{figure}

In contrast to the previous illustration, we have that the negative value of $\vect{\beta}$ indicates a decreased underlying intensity (and thus longer sojourn times) as years progress, which is multiplicative with factor $\exp(-22.54\cdot t^\ast)$. In turn, the parameters of $\vect{\theta}$ imply a baseline Gompertz factor given by $\exp(1.92)\approx 6.82$ at $t^\ast=0$ (which amounts to the year $1950$), thereafter it increases with the factor $\exp(2.94 \cdot t^\ast)$, which itself increases the mortality as time progresses. Figure \ref{DNK_females_time} shows that the overall tradeoff from increased ``environmental" mortality and decreased state-traversing speed leads to a net decrease in mortality through the years at young ages, but virtually no changes at older ages.

In order to better understand the issue of potential overfitting of the LC model for smaller populations (like the case of Denmark above), we did an analogous analysis for USA females, and the resulting parameters of the PI model in this case are 
\begin{align*}
	\bfpi&=(0.3, 0.15, 0.13, 0.13, 0.15, 0.01, 0, 0.01, 0.05, 0.03, 0.04, 0),\\
	\mbox{Diag}(\bfT)&=-(1.15,  0.11,  0.65,  0.05,  7.82,  4.02,  0.19,  2.81, 56.14,  0.2,  0.77,  0.70),\\
	\bft&=(0, 0, 0, 0, 0, 0.22, 0, 0, 2.72, 0, 0, 0.7)^\prime,\\
	\vect{\beta}&=(-17.98),\\
	\vect{\theta}&=(1.93, 1.74). 
\end{align*}
Figure \ref{USA_females_time} depicts the resulting log- mortality curves for 1960, 1980 and 2000 together with the empirical observations. Indeed, here the wiggling of the LC curves is considerably improved. One sees that in this case 
from a purely statistical point of view the LC model may be preferable over the PI model, but one should keep in mind that the PI model offers a more direct interpretation of the resulting model, with much fewer parameters, while still providing a {\color{black} visually acceptable statistical fit}. 

\begin{remark}\rm
{\color{black} Forecasting with such a model is straightforward if we are only interested in non-autoregressive effects\footnote{Given a desired time $t$ (and thus $t^\ast$) and age $x$, Equation \eqref{mortality_time_ph} provides the corresponding mortality estimate. Moreover, all the implied distributional properties can also be extracted using the fitted parameters, using Equation \eqref{density_of_model}.}, and the incorporation of cohort effects can also be done as with time, and with possible non-linear terms. However, the predictive performance falls short of models which are specifically designed for this task, and incorporating these temporal dependencies is an interesting subject of future research.}
\end{remark}

\begin{figure}[!htbp]
\centering
\includegraphics[width=\textwidth]{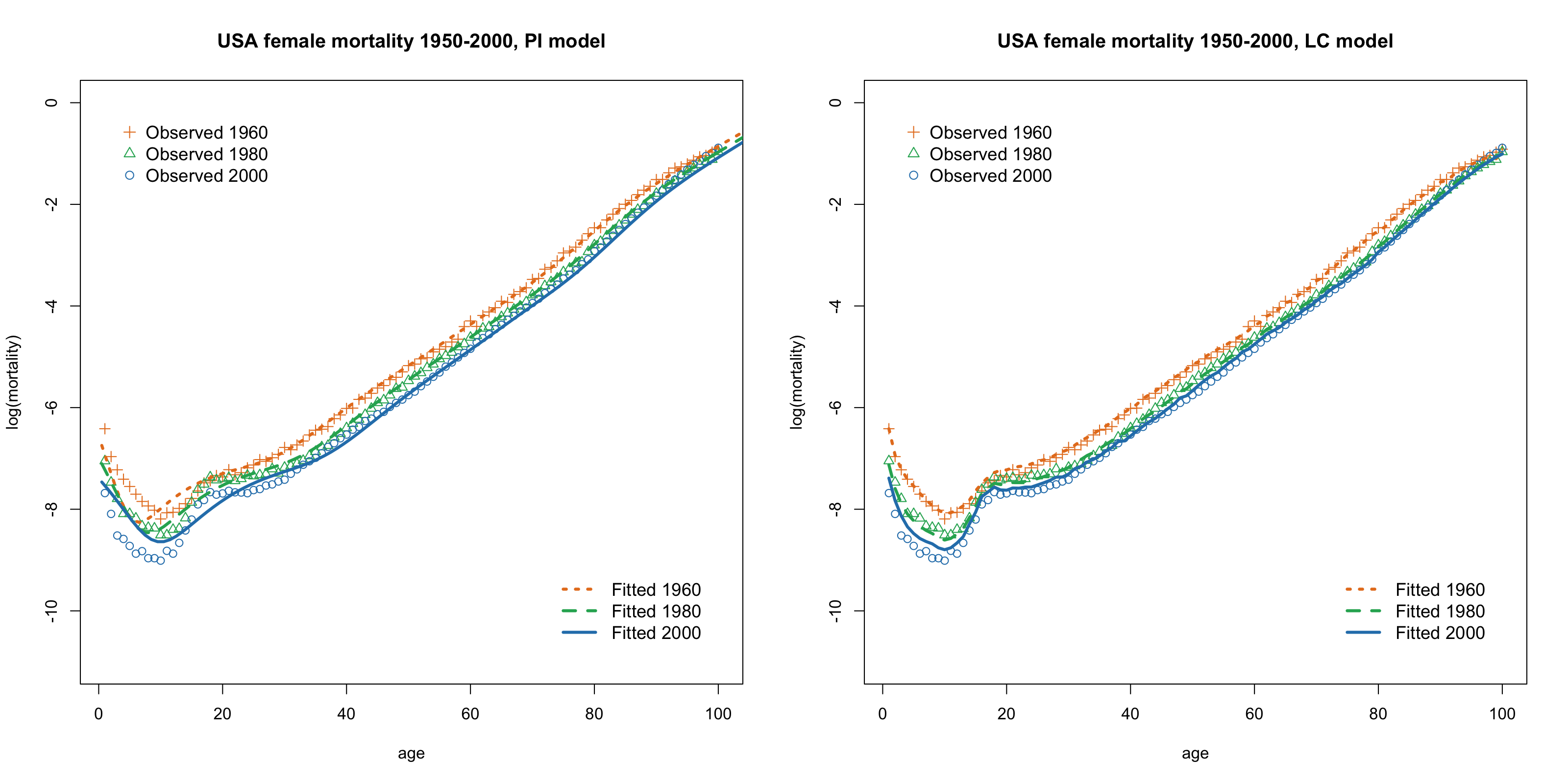}
\caption{ PI model using time as a covariate, plotted for {\color{black} the years 1960, 1980 and 2000, and for the USA females dataset}. 
} \label{USA_females_time}
\end{figure}

{ 
\section{General survival modeling}\label{sec:cens}
This section illustrates the use of inhomogeneous Markov models for survival analysis, showcasing the implementation of Algorithm \ref{alg;IPHreg} in the right-censored case. The first is a simulated example, while the second is a Veterans' lung cancer dataset, both of which are subject to right-censoring and include covariate information, and thus the regression methods from this paper can be applied.

\subsection{Simulated data}
Consider the following simple two-group setup. We study two groups of equal size, $500$ each. For the first group,
$$Z_i\sim \mbox{IPH}\left((1/4,\,1/2,\, 1/4), \, \begin{pmatrix}
-10 & 0 & 0\\
0 & -1 & 0\\
0 & 0 & -1/10
\end{pmatrix},\, \frac{3}{2}s^{1/2}\right)$$
while for the second
$$Z_i\sim \mbox{PH}\left((1/4,\,1/2,\, 1/4), \, \begin{pmatrix}
-20 & 0 & 0\\
0 & -2 & 0\\
0 & 0 & -2/10
\end{pmatrix}\right).$$
Subsequently, we randomly censor the $Z_i$ by $1000$ exponential variables $E_i$ of rate $1/10$ by 
$$Y_i=\min\{Z_i,\,E_i\}, \quad \delta_i=1\{Y_i=Z_i\},\quad i=1,\dots, 1000,$$
leading to roughy $10\%$ right-censored observations. Notice that in the above setting, the inhomogeneity function also changes with the covariates, such that we need to implement the fitting procedure associated with model \eqref{double_prop_intens_spec}. Additionally, for comparison, we fitted two misspecified models: a standard Weibull proportional hazards model and the scalar ($\mbox{dim}(\mat{T})=1$) version of model \eqref{double_prop_intens_spec}. The resulting empirical versus fitted survival curves for each group, as well as the hazard ratio between the two groups, are given in Figure \ref{sim_ex}. We observe that only the correctly specified model recovers the oscillating shape of the theoretical hazard ratio, and within each group, the survival curve is better estimated in the tails of the distribution. 
Additionally, the goodness of fit can be evaluated by considering a generalization of the Cox-Snell residuals given by 
\begin{align*}
r_i=-\log\left(\vect{\pi}\exp \left(m(\vect{x}_i \vect{\beta} ) \int_0^{y_i} \lambda (s;\, \vect{\theta}(\vect{x}_i \vect{\gamma}))ds\ \mat{T} \right)\vect{e}\right)\,,\quad i=1,\dots, 100 \,,
\end{align*}
which under the null hypothesis of correct specification of the model, should constitute a right-censored unit-exponential dataset. Figure \ref{sim_ex2} depicts the Kaplan-Meier and Nelson-Aalen non-parametric estimators of the survival curve of the residuals $r_i$ to assess standard exponential behavior visually. We observe that only the correctly specified model provides an adequate fit in the sense of staying within the confidence bounds implied by Greenwood's formula for the first approach, and by aligning itself with the identity for the second one.

\begin{figure}[!htbp]
\centering
\includegraphics[width=0.45\textwidth]{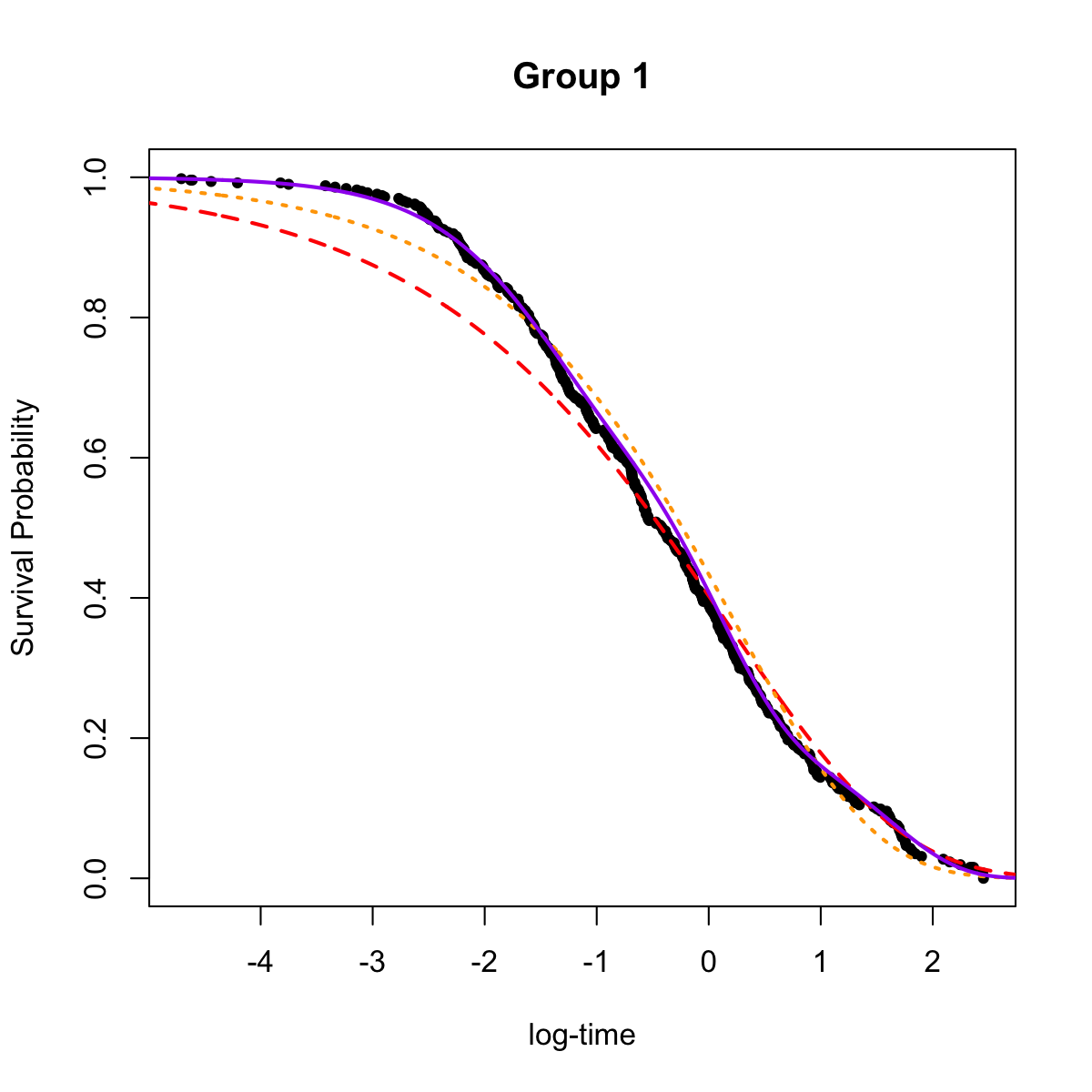}
\includegraphics[width=0.45\textwidth]{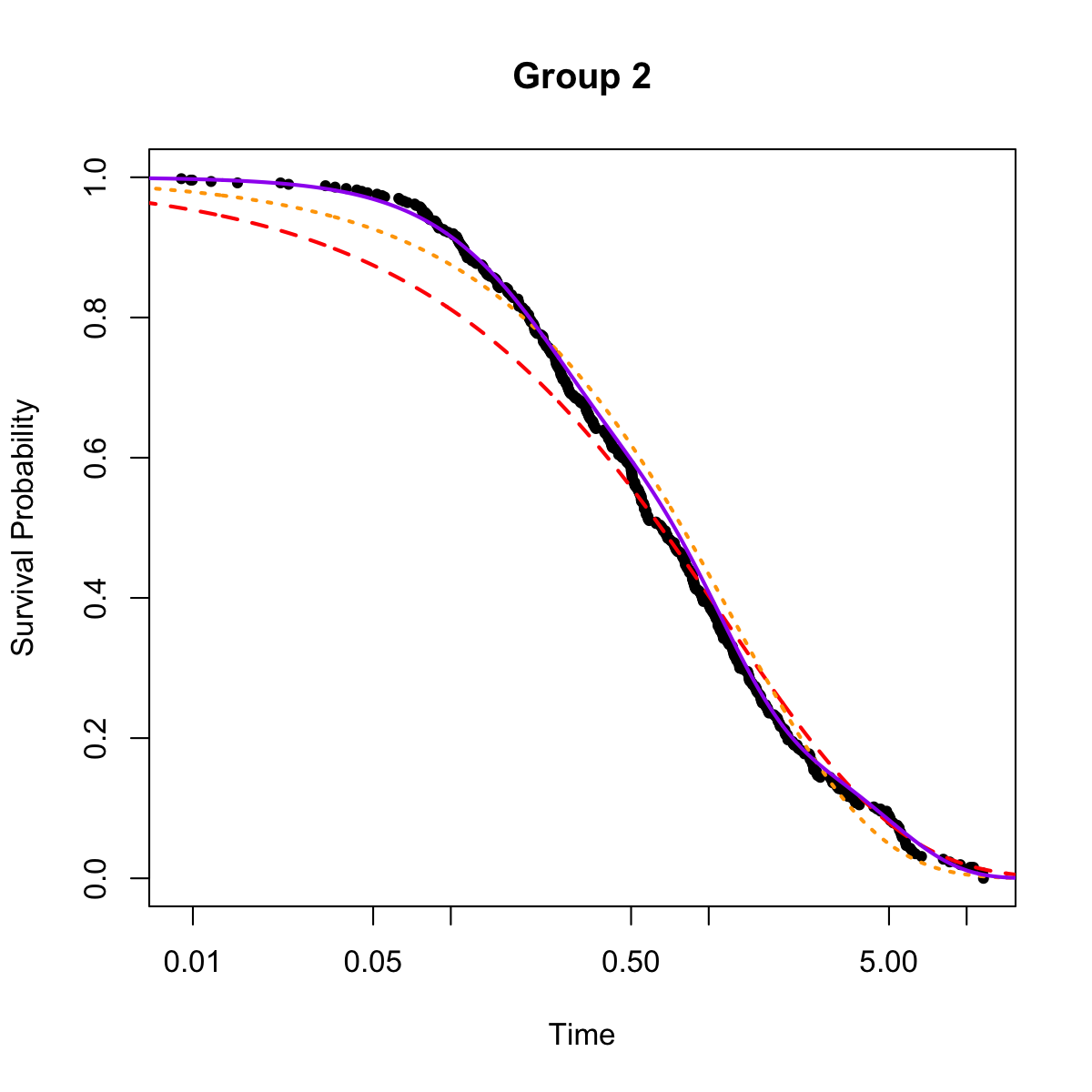}
\includegraphics[width=0.45\textwidth]{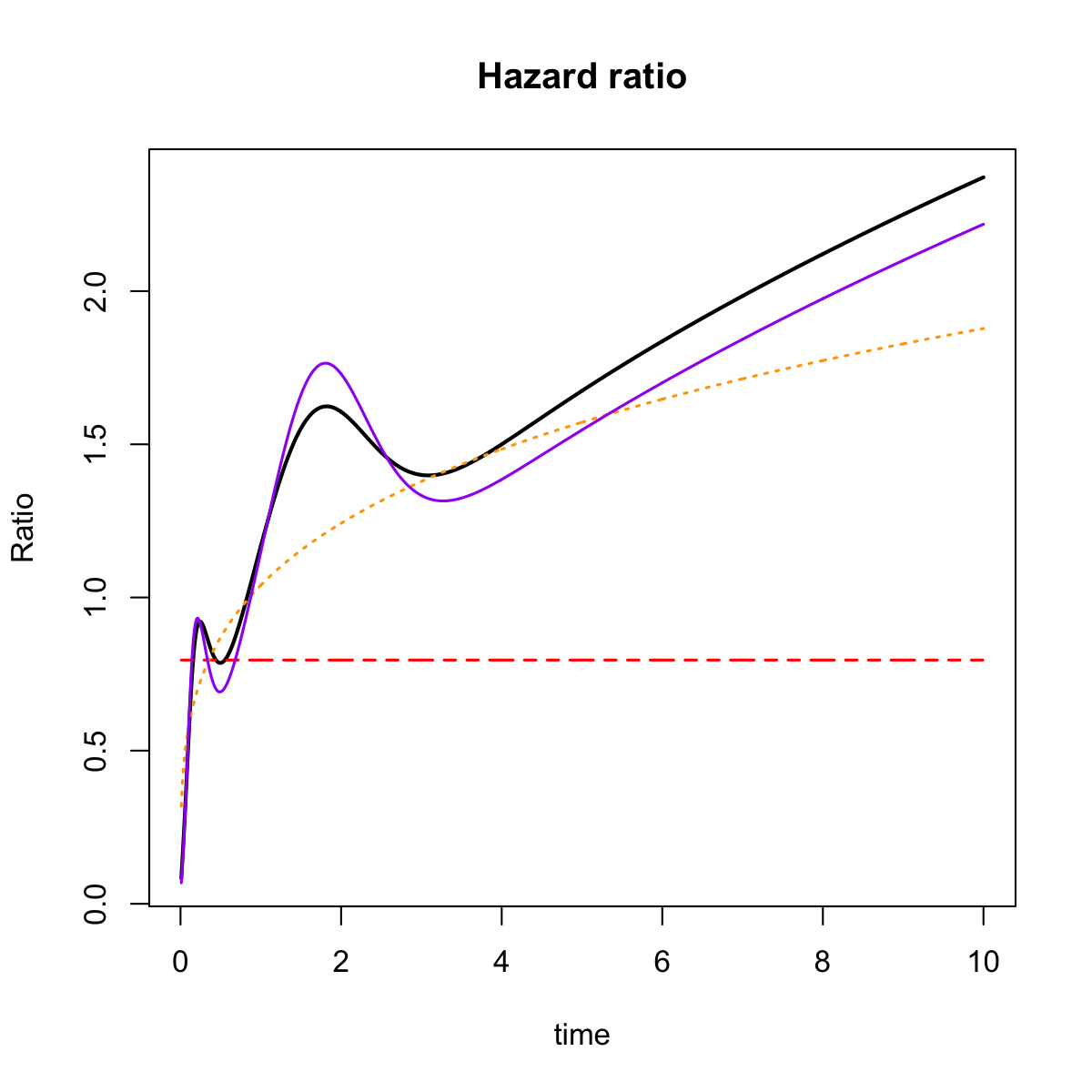}
\caption{Simulated example. Top: empirical (black, dots) versus fitted survival curves for Weibull proportional hazards model (red, dashed), scalar version of \eqref{double_prop_intens_spec} (orange, dotted), and correctly specified \eqref{double_prop_intens_spec} (purple, solid). Bottom: theoretical hazard function ratio between the two groups (black solid)  versus the aforementioned models (same line code).
} \label{sim_ex}
\end{figure}

\begin{figure}[!htbp]
\centering
\includegraphics[width=\textwidth]{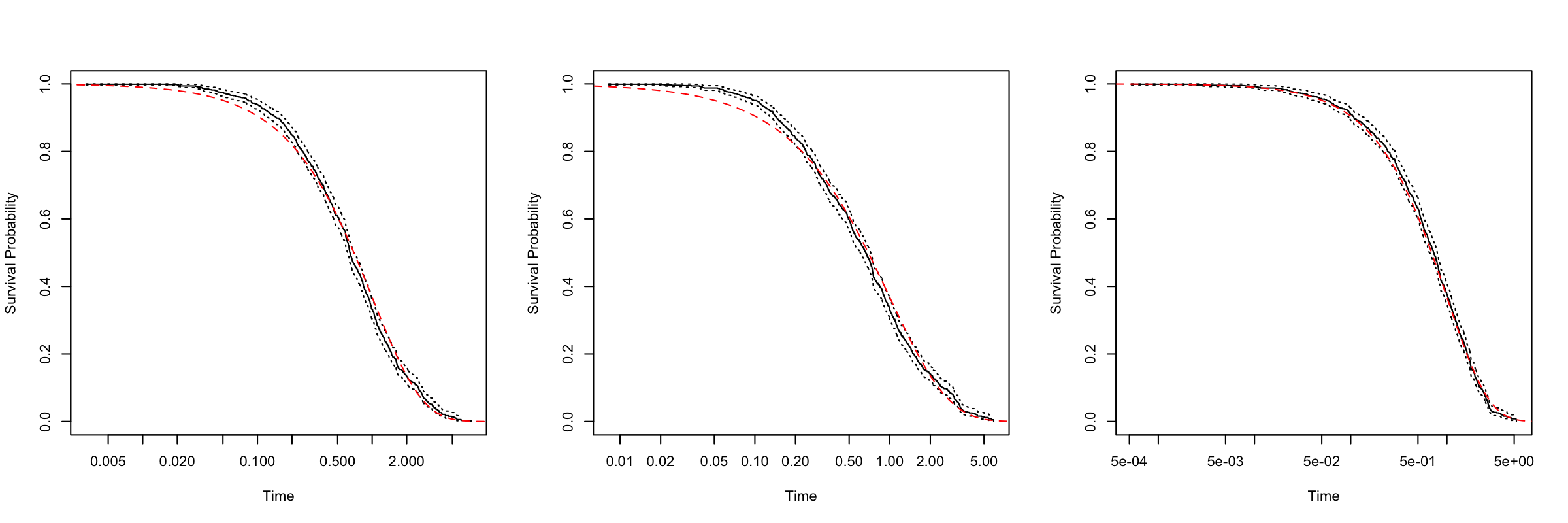}
\includegraphics[width=\textwidth]{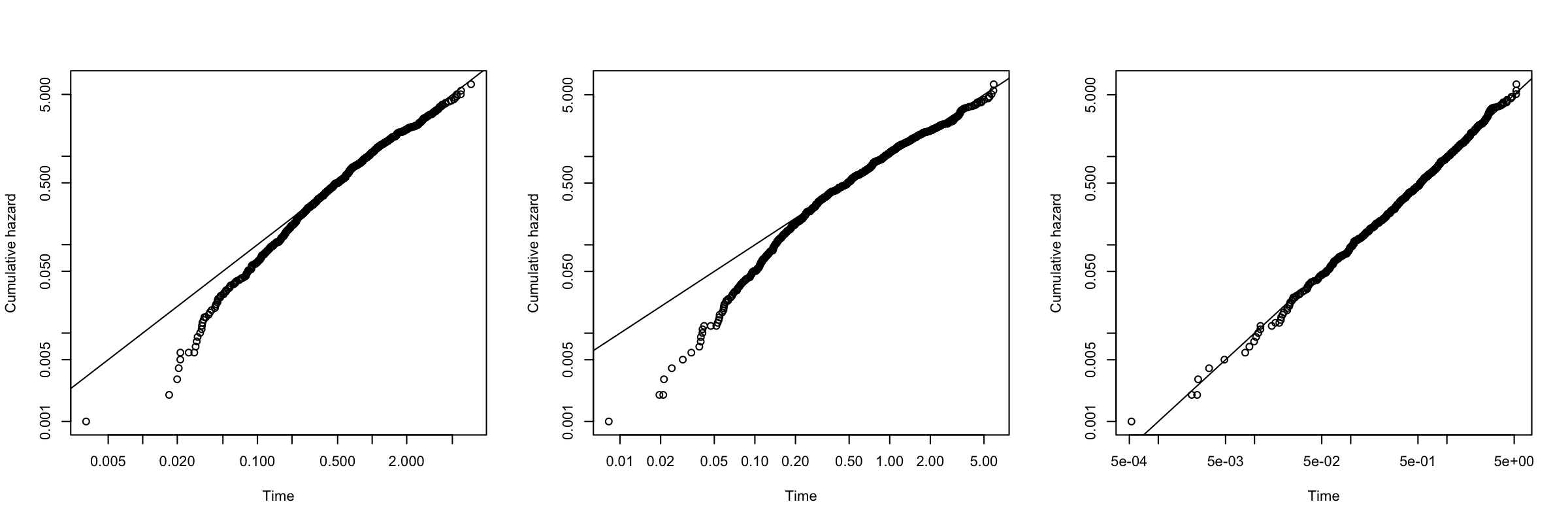}
\caption{Simulated example goodness of fit based on the three models from Figure \ref{sim_ex}. Top: survival curves based on the Kaplan-Meier estimator of the residuals $r_i$. Bottom: plot between $r_i$ and $-\log(\widehat{\overline{F}}(r_i))$ based on the Nelson-Aalen estimator.
} \label{sim_ex2}
\end{figure}

\subsection{Veterans data}
We consider a well-known dataset from the Veterans' Administration Lung Cancer study (publicly available in the R package \textit{survival}).

The data consist of $137$ observations and $12$ variables coming from a randomized trial where two groups are given different treatment regimens for lung cancer. The variables of interest for our analysis are survival time, censoring status (binary), treatment (binary), prior therapy (binary), and Karnofsky performance score (from 0 to 100). Previous studies (cf. for instance, the vignette \url{https://cran.rstudio.com/web/packages/survival/vignettes/timedep.pdf}, pp. 15-22) have shown that only the Karnofsky score violates the proportional hazards assumption. Common ways to address this issue are stratifying the dataset into time groups or considering continuous time-dependent regression coefficients. However, the coefficient for treatment becomes significant when improving the model's fit. Consequently, we presently focus only on the goodness of fit of the model introduced in the text. The latter is measured by its log-likelihood and residuals.

For comparison, we consider a classical model: a) Weibull proportional hazards model. Given the modest size of the dataset, we avoid large matrices $\mat{T}$ for the IPH-based model. Thus, we present a model based on $\dim({\mat{T}})=2$ with a Coxian structure: b) Matrix-Weibull PI model.

The results are summarized in the table below and in Figure \ref{vets}.

\begin{table}[H]
\begin{tabular}{|c|c|c|l|l|}
\hline
Model & Log-likelihood & Nb. Parameters & AIC        & BIC        \\ \hline
a   & -136.21   & 5          & 282.42     & 297.02     \\ \hline
b   & -127.74   & 7          & {269.48} & {289.92} \\ \hline
\end{tabular}
\end{table}

We observe that model b) improves the fit to where the residuals are within bounds, and the log-likelihood, AIC and BIC reflect this. 

The AIC and BIC computed here for the matrix-distributions works well for comparison purposes, given the low dimensions of the distributions. However, in general, one should proceed with caution (especially in high dimensions) due to the well-known identifiability issue for PH distributions, which carries over to IPH survival models. Namely,  different dimensions and parameter configurations may lead to very similar density shapes.

\begin{figure}[]
\centering
\includegraphics[width=0.7\textwidth]{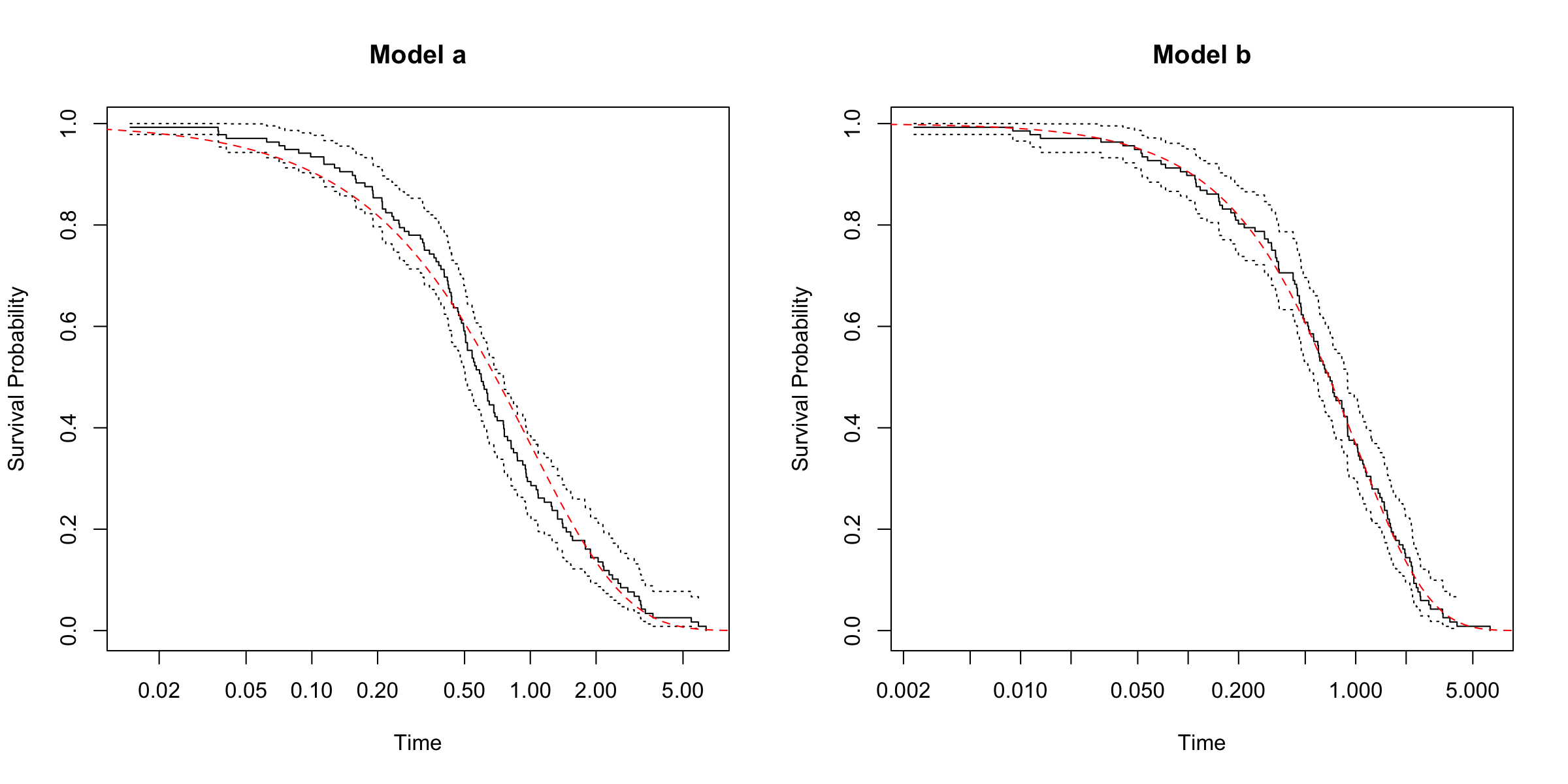}
\includegraphics[width=0.7\textwidth]{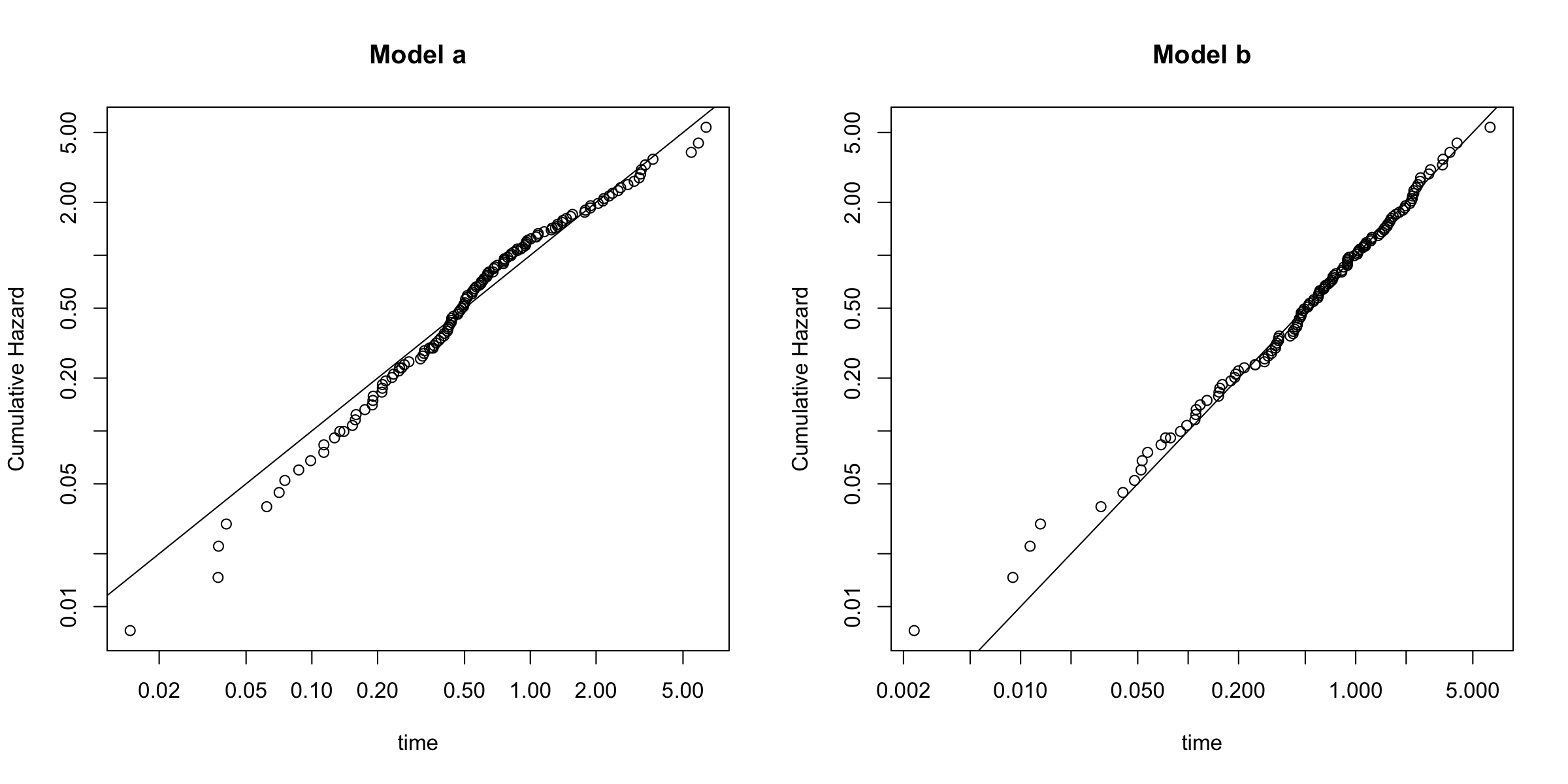}
\caption{Veterans dataset. Goodness of fit diagnostics based on the Kaplan-Meier estimator of the residuals (top row), and on the residuals versus Nelson-Aalen implied cumulative hazard plot (bottom row). The corresponding models are: a) Weibull proportional hazards model; b) Matrix-Weibull PI model.} \label{vets}
\end{figure}

}

\section{Conclusion}\label{sec:conclusion}
In this paper we investigated the potential of matrix-Gompertz distributions as special cases of inhomogeneous phase-type distributions for the modeling of mortality across the entire lifespan as a parsimonious alternative to popular modeling approaches. The goal is not to replace the latter but rather to see how close in performance one can get with this type of models, as they also allow a causal interpretation in terms of biological aging, with lifetime traversing through different stages and a possible early death in each of them. In that latter context, the introduction of the time-transform inherent in the construction of inhomogeneous phase-type distributions allowed to reduce the necessary dimension for adequate modeling reported in previous studies from 200 - 250 to 10 - 12. In addition, we studied a regression approach that can be used for multi-population mortality modeling based on proportional intensities. As the first contribution on regressing IPH distributions, this part may also be of independent interest. We developed an estimation procedure and illustrated the models for female mortality data from Denmark, Japan and the USA. 

Since matrix-Gompertz distributions are dense in the class of all lifetime distributions, from a general perspective we have illustrated that misspecified survival models can sometimes be just one matrix away from providing a good fit. The practical implementation of the methods in this paper relies on the EM algorithm, which for large matrices and datasets can become significantly slow. It is our experience that the use of an inhomogeneous intensity function in the underlying stochastic jump process can alleviate the need for a large number of phases, thus making estimation feasible.

Although we have used the midpoint approach, which is standard and works well for $1\times 1$ resolution life tables, an interval-censored estimation procedure is statistically more accurate and has potential to be applied to data at lower resolutions, which will be the subject of future research. In the present paper we considered regression according to proportional intensities, as a first and workable approach for IPH regression. {  We also showcased how the right-censored version of our algorithm can be used in survival settings.} In general, for multi-population mortality modeling it is of course restrictive to assume that the operational time for each population is scaled with a constant across the entire lifetime. Another promising direction is hence the regression of individual rates of the intensity matrix, as this would allow for interesting conclusions in terms of the aging interpretation underlying the IPH models. However, if applied to all possible parameters, the model will be highly overparametrized, and hence some constraints will have to be imposed, challenging the corresponding estimation methods. It will be an interesting subject of future research to identify workable compromises in this direction, which should increase the versatility of the resulting models.\\

%

\textbf{Acknowledgment.} The authors would like to thank Johannes Thuswaldner for conscientiously implementing some tests for the fitting procedures at an early stage of the project, and Peter Hieber for interesting discussions on the topic. HA and MaB would like to acknowledge financial support from the Swiss National Science Foundation Project 200021\_191984. JY would like to acknowledge financial support from the Swiss National Science Foundation Project  IZHRZ0\_180549.
\bibliography{MortalityModelling.bib}

\end{document}